\newtheorem{theorem}{Theorem}
\newtheorem{corollary}[theorem]{Corollary}
\begin{document}

\title{On estimating extremal dependence structures by parametric spectral measures}
\author{Jan Beran$^{1}$ and Georg Mainik$^{2}$\\ 
\\
$^{1}$Department of Mathematics and Statistics,\\
University of Konstanz\\
$^{2}$RiskLab, Department of Mathematics, ETH Zurich}
\maketitle

\begin{abstract}
Estimation of extreme value copulas is often required in situations
where available data are sparse. Parametric methods may then be the 
preferred approach. 
A possible way of defining parametric families that are simple and,
at the same time, cover a large variety of multivariate
extremal dependence structures is to 
build models based on spectral measures.
This approach is considered here. 
Parametric families of spectral measures   
are defined as convex hulls of suitable basis elements, 
and parameters 
are estimated by projecting an initial nonparametric estimator 
on these finite-dimensional spaces.  
Asymptotic distributions are derived for the estimated parameters 
and the resulting estimates of the spectral measure and the extreme 
value copula.
Finite sample properties are illustrated by a simulation study.
\end{abstract}

%
%
%
%
%
%
%
%
%

\section{Introduction}

Extreme value copulas provide a suitable general approach to modelling
multivariate extremes. Various nonparametric methods for estimating extreme
value copulas have been proposed in the last few years
\cite{Zhang/Wells/Peng:2008}, \cite{Gudendorf/Segers:2011}, 
\cite{Gudendorf/Segers:2012},
\cite{Caperaa/Fougeres/Genest:1997} (also see \cite{Pickands:1981},
\cite{Deheuvels:1991} and \cite{Hall/Tajvidi:2000} for related approaches). In
practical applications, such as for instance operational risk or rare natural
disasters, one is however often in a situation where available data are
sparse. Nonparametric methods generally require a fairly large sample size in
order to be reliable. For small samples and in situations where one may have
some idea about plausible properties of the distribution, parametric methods
are likely to yield more accurate results. An approach to parametric inference
for extreme value copulas is discussed for instance in
\cite{Boldi/Davison:2007}.

One of the key issues is how to define parametric families that are simple and
at the same time general enough to cover a large variety of multivariate
dependence structures in the extremes. For instance, some of the most popular
models are based on Archimedean copulas, which all correspond to the same type
of extremal dependence structure, characterized by the Gumbel copula
\cite{Genest/Rivest:1989}. One way of achieving more 
flexibility in the extremes is to build models based on spectral measures.
This is the approach taken here. For related work see e.g.
\cite{Fils-Villetard/Guillou/Segers:2008}, 
\cite{Guillotte/Perron/Segers:2011}, 
and \cite{Gudendorf/Segers:2012}. 

More specifically, the idea pursued in the following is to select a finite 
number of suitable spectral measures as basis elements and to use their convex 
combinations as a parametric family of dependence structures. 
Given a sufficiently large
number of such basis elements, any spectral measure can be approximated by a
weighted sum. 
Estimation of the coefficients can then be carried out by
projecting a nonparametric estimator, such as the one in
\cite{Caperaa/Fougeres/Genest:1997}, on the finite-dimensional space 
generated by the basis elements.
If the number of basis elements in the 
model is large (and increasing with the sample size), then
projecting the original non-parametric estimator can 
be considered as a discretization technique. 
This is the setting in \cite{Fils-Villetard/Guillou/Segers:2008} 
and \cite{Gudendorf/Segers:2012}. 

On the other hand, an appropriate model with a small number of basis elements 
can have the advantage of dimension reduction. 
Given a reasonable parametric model with a small number of parameters, 
one can reduce the variability of a nonparametric estimator by projecting it 
on a low-dimensional space. This is the approach studied here. 
We define explicit parameter estimators in the low-dimensional setting 
and study the asymptotic distribution of the resulting estimators 
of the dependence structure. 
To illustrate the potential advantage of dimension reduction,
we construct an example with three basis elements and compare a 
non-parametric estimator with its low-dimensional projection in a simulation
study.

Note that in principle any nonparametric estimator 
(cf.   
\cite{Deheuvels:1991}, \cite{Caperaa/Fougeres/Genest:1997},
\cite{Hall/Tajvidi:2000}, \cite{Zhang/Wells/Peng:2008},
\cite{Gudendorf/Segers:2011}, \cite{Gudendorf/Segers:2012})
can be used as a starting point.
Depending on the nonparametric method 
used in the projection, the marginal distributions are either 
known or estimated from the observed data. 
The asymptotic results given below only require that a functional 
limit theorem in a suitable topology 
holds for the initial estimator.

The paper is organized as follows. Basic definitions and concepts of
multivariate extreme value theory are summarized in section 2. Parametric
models in the spectral domain and a corresponding parametric estimator are
introduced in section 3. Asymptotic results, including consistency and a
central limit theorem, are derived in section 4. 
The theoretical results are illustrated by simulations for a
specific model in section 5. Final remarks in section 6 
with a discussion of some open problems conclude the paper.

\section{Basic definitions}

Consider a sample $\mathbf{X}_{1},\ldots,\mathbf{X}_{n}$ consisting of iid
realizations $\mathbf{X}_{i}=\left(  X_{i,1},\ldots,X_{i,d}\right)  ^{T}$ of a
$d$-dimensional random vector $\mathbf{X}=\left(X_{1},\ldots,X_{d}\right)^{T}\in\mathbb{R}^{d}$ with marginal
distributions $F_{1},\ldots,F_{d}$ and copula $C_{\mathbf{X}}$. That is,
$F_{j}(t)=P\left(  X_{j}\leq t\right)  $ for $j=1,\ldots,d$ and $t\in
\mathbb{R}$, and
\[
P\left(  \mathbf{X}\leq\mathbf{x}\right)  =C_{\mathbf{X}}\left(  F_{1}\left(
x_{1}\right)  ,\ldots,F_{d}\left(  x_{d}\right)  \right)
\]
for $\mathbf{x}\in\mathbb{R}^{d}$. The notation $\mathbf{x}\leq\mathbf{y}$ for
$\mathbf{x},\mathbf{y}\in\mathbb{R}^{d}$ means $x_{j}\leq y_{j}$ for
$j=1,\ldots,d$. The transposition operator $(\cdot)^{T}$ in $\mathbf{X}=\left(X_{1},\ldots,X_{d}\right)^{T}$ indicates that $\mathbf{X}$ is considered as a column vector. Distinguishing columns and rows will be useful in some calculations 
later on. 

The vector $\mathbf{M}_{n}=\left(  M_{n,1},\ldots
,M_{n,d}\right)  ^{T}$ of componentwise maxima%
\[
M_{n,j}=\max_{i=1,2,\ldots,n}X_{i,j}%
\]
then has marginal distributions $P(M_{n,j}\leq t)=F_{j}^{n}(t)$ and a copula
$C_{\mathbf{M}_{n}}\left(  u\right)  $ given by
\[
P\left(  \mathbf{M}_{n}\leq\mathbf{x}\right)  =C_{\mathbf{M}_{n}}\left(
F_{1}^{n}\left(  x_{1}\right)  ,\ldots,F_{d}^{n}\left(  x_{d}\right)  \right)
=C_{\mathbf{X}}^{n}\left(  F_{1}\left(  x_{1}\right)  ,\ldots,F_{d}\left(
x_{d}\right)  \right)  .
\]
In the limit one obtains, under general conditions, an \emph{extreme value
copula} $C\left(  \mathbf{u}\right)  =\lim_{n\rightarrow\infty}C_{\mathbf{X}%
}^{n}\left(  \mathbf{u}^{1/n}\right)  $ ($\mathbf{u}=\left(  u_{1}%
,\ldots,u_{d}\right)  ^{T}\in\left[  0,1\right]  ^{d}$) with the
characteristic max-stable property
\begin{equation}
C\left(  \mathbf{u}\right)  =C^{n}\left(  \mathbf{u}^{1/n}\right)
\label{Eq:Max-Stable}%
\end{equation}
for all $n\in\mathbb{N}$. For an accessible introduction to this topic see
e.g. \cite{Gudendorf/Segers:2010} and references therein. The definition
(\ref{Eq:Max-Stable}) of extreme value or max-stable copulas is equivalent to
the representation
\[
C\left(  \mathbf{u}\right)  =\exp\left(  -\ell\left(  -\log u_{1},\ldots,-\log
u_{d}\right)  \right)
\]
with the \emph{tail dependence function}
\begin{equation}
\ell(\mathbf{x})=\int_{\Delta_{d}}\max_{i=1,\ldots,d}\left(  w_{i}%
x_{i}\right)  d\Psi\left(  w_{1},\ldots,w_{d}\right)  ,\quad(\mathbf{x}%
\in\lbrack0,\infty)^{d}), \label{eq:tail_dep_func}%
\end{equation}
and $\Psi$ the so-called \emph{spectral measure} on the unit simplex in
$\mathbb{R}^{d}$, $\Delta_{d}=\{\mathbf{x}\in\lbrack0,1]^{d}:\sum_{i=1}%
^{d}x_{i}=1\}$, satisfying
\begin{equation}
\int_{\Delta_{d}}w_{i}d\Psi(w_{1},\ldots,w_{d})=1,\quad(i=1,\ldots,d)
\label{Eq:Spectral_Measure_Condition}%
\end{equation}
(cf. Theorem 6.2.2 in \cite{Gudendorf/Segers:2010}). Note that the last
condition implies $\int_{\Delta_{d}}d\Psi\left(  \mathbf{w}\right)  =d$. The
underlying original results go back to \cite{de_Haan/Resnick:1977} and
\cite{Pickands:1981}. The main conclusion is that spectral measures, tail
dependence functions, and extreme value copulas are equivalent representations
of dependence structures in multivariate extreme value theory. Note also that
$\ell(r\mathbf{x})=r\ell(\mathbf{x})$ for $r>0$ and $\mathbf{x}\in
\lbrack0,\infty)^{d}$, so that it is sufficient to specify $\ell(\mathbf{x})$
for $\mathbf{x}\in\Delta_{d}$ only. The restriction of $\ell$ to $\Delta_{d}$
is also called Pickands dependence function and is usually denoted by
$A\left(  \cdot\right)  $. The extension to $\mathbf{x}\in\lbrack0,\infty
)^{d}$ is obtained by
\begin{equation} \label{eq:l_and_A}
\ell(\mathbf{x})=\Vert\mathbf{x}\Vert_{1}\ell\left(  \frac{\mathbf{x}}%
{\Vert\mathbf{x}\Vert_{1}}\right)  =\Vert\mathbf{x}\Vert_{1}A\left(
\mathbf{w}\right)
\end{equation}
where $\Vert\mathbf{x}\Vert_{1}=x_{1}+\cdots+x_{d}$ (since $x_{j}\geq0$), and
$\mathbf{w}=$ $\mathbf{x}/\Vert\mathbf{x}\Vert_{1}$. Note that condition~\eqref{Eq:Spectral_Measure_Condition} is equivalent to 
\begin{equation}\label{Eq:Spectral_Measure_Condition_A}
A(\mathbf{e}_{j}) = 1,\quad (j=1,\ldots,d)
\end{equation}
where $\mathbf{e}_{j}=(e_{j,1},...,e_{j,d})^{T}$ is the $j$-th unit vector in 
$\mathbb{R}^d$: $e_{j,l}=0$ ($j\neq l$) and $e_{j,j}=1$.
That is, \eqref{Eq:Spectral_Measure_Condition} standardizes $A$ on the vertices of the unit simplex $\Delta_d$. 

\section{Parametric models for spectral measures: construction and estimation}

\subsection{Models}

One way of building parametric models that encompass a large variety of
extremal dependence structures is to start at the level of the spectral
measure $\Psi$. 
Thus, suppose that $\Psi_{1}\ldots,\Psi_{p}$ are some fixed spectral 
measures. The correspoding dependence
functions and extreme value copulas will be denoted by $\ell_{1},\ldots
,\ell_{p}$, $A_{1},\ldots,A_{p}$ and $C_{1},\ldots,C_{p}$, respectively. A
parametric family of spectral measures $\mathcal{P}_{p}=\left\{
\Psi\left(  \cdot,\mathbf{\theta}\right)  ,\mathbf{\theta}\in\Theta\right\}
$, and corresponding families $\mathcal{A}_{p}$ and $\mathcal{C}_{p}$ of
(Pickands) dependence functions and copulas respectively, can then be obtained
by defining spectral measures of the form%
\[
\Psi\left(  \cdot,\mathbf{\theta}\right)  =\sum_{i=1}^{p-1}\theta_{i}\Psi
_{i}(\cdot)+\left(  1-\sum_{i=1}^{p-1}\theta_{i}\right)  \Psi_{p}(\cdot)
\]
where $\mathbf{\theta}=\left(  \theta_{1},\ldots,\theta_{p-1}\right)
\in\Theta$ and $\Theta=\{\mathbf{\vartheta}\in(0,1)^{p-1}:\sum_{i=1}%
^{p-1}\vartheta_{i}\leq1\}$. As (\ref{Eq:Spectral_Measure_Condition}) remains
valid for convex combinations, $\Psi\left(  \cdot,\mathbf{\theta}\right)  $ is
a spectral measure by definition. In terms of the corresponding dependence
functions we have%
\begin{align*}
\ell\left(  \mathbf{x},\mathbf{\theta}\right)   &  =\sum_{i=1}^{p-1}\theta
_{i}\ell_{i}\left(  \mathbf{x}\right)  +\left(  1-\sum_{i=1}^{p-1}\theta
_{i}\right)  \ell_{p}\left(  \mathbf{x}\right)  ,\\
A\left(  \mathbf{w},\mathbf{\theta}\right)   &  =\sum_{i=1}^{p-1}\theta
_{i}A_{i}\left(  \mathbf{w}\right)  +\left(  1-\sum_{i=1}^{p-1}\theta
_{i}\right)  A_{p}\left(  \mathbf{w}\right)  .
\end{align*}
For the copulas we obtain
\begin{align}
C\left(  \mathbf{u},\mathbf{\theta}\right)   &  =\exp\left\{  -\sum
_{i=1}^{p-1}\theta_{i}\left[  \ell_{i}\left(  -\log\mathbf{u}\right)
-\ell_{p}\left(  -\log\mathbf{u}\right)  \right]  -\ell_{p}\left(
-\log\mathbf{u}\right)  \right\}  \label{eq:C(.,theta)}\\
&  = C_{p}\left(  \mathbf{u}\right)  \left(  \prod_{i=1}^{p-1}\left(
\frac{C_{i}\left(  \mathbf{u}\right)  }{C_{p}\left(  \mathbf{u}\right)
}\right)  ^{\theta_{i}}\right) \nonumber%
\end{align}
where $\ell_{i}\left(  -\log\mathbf{u}\right)  =\ell_{i}\left(  -\log
u_{1},\ldots,-\log u_{d}\right)$.

Henceforth we assume that the parameter $\theta$ is identifiable in 
the sense that $A_{\mathbf{\theta}}= A_{\mathbf{\theta}'}$ implies
$\mathbf{\theta}=\mathbf{\theta}'$ for 
$\mathbf{\theta},\mathbf{\theta}'\in\Theta$. 
A sufficient criterion for the identifiability of $\mathbf{\theta}$ 
is linear independence of the basis elements $A_1,\ldots,A_p$. 
If $\theta$ is not identifiable, then an estimator 
$\hat{\mathbf{\theta}}$ may fail to converge. 
An important example of this issue 
is the decomposition of discrete spectral measures.  
According to \cite{Mai/Scherer:2011}, any discrete spectral measure 
on $\Delta_2$ can be expressed as a convex combination of two-point 
spectral measures. 
This result can also be written in terms of piecewise linear 
dependence functions and Marshall-Olkin copulas. 
However, the decomposition is not necessarily unique. 
This can be illustrated by the following example. 
Let $(t,1-t)\in\Delta_2$ be represented by the first coordinate 
$t\in[0,1]$ and 
consider the family $\mathcal{Q}_4$ of discrete spectral measures 
$\Psi=\sum_{i=1}^{4}c_i\delta_{(i-1)/3}$ with $c_i\ge0$ 
for $i=1,\ldots,4$. 
It is easy to see that a basis of $2$-point spectral measures needed for the 
decomposition of all $\Psi\in\mathcal{Q}_4$ must include all elements of 
$\mathcal{Q}_4$ with only two atoms. These are
\[
\Psi_1=\delta_{0}+\delta_{1},\ 
\Psi_2=\delta_{1/3}+\delta_{2/3},\ 
\Psi_3=\frac{1}{2}\delta_{0} + \frac{3}{2}\delta_{2/3},\ 
\Psi_4=\frac{3}{2}\delta_{1/3} + \frac{1}{2}\delta_{1}.
\]
The non-uniqueness follows from 
$\frac{1}{4}\Psi_1 + \frac{3}{4}\Psi_2 = \frac{1}{2}(\Psi_3+\Psi_4)$. 

\subsection{Estimation}

Several nonparametric estimators of Pickands dependence functions, 
spectral measures, and corresponding extreme value copulas  
have been proposed in the recent literature
\cite{Deheuvels:1991}, \cite{Caperaa/Fougeres/Genest:1997},
\cite{Hall/Tajvidi:2000}, \cite{Zhang/Wells/Peng:2008},
\cite{Gudendorf/Segers:2011}, 
\cite{Gudendorf/Segers:2012}. 
Generally, these methods require fairly large
sample sizes in order to achieve a sufficient degree of accuracy. In contrast,
parametric estimates are expected to be reasonably accurate for moderate or
even small sample sizes, provided that the parametric assumptions are
sufficiently realistic. To see how much may be gained by parametric
estimation, we consider the following approach. Suppose that a nonparametric
estimate $\widehat{A}$ of the dependence function $A$ is given, and recall that the corresponding $\widehat{l}$ is obtained from $\widehat{A}$ according to~\eqref{eq:l_and_A}. 
A natural
parametric estimator based on the family $\mathcal{A}_{p}$, and
$\widehat{A}$ as intial estimate, is obtained by projecting the function
$\widehat{A}$ on $\mathcal{A}_{p}$. 
Note that even if \eqref{eq:tail_dep_func} does not hold for $\widehat{A}$ 
(see e.g. \cite{Gudendorf/Segers:2011}), it holds automatically for the projection of $\widehat{A}$ on $\mathcal{A}_{p}$, so that 
this projection is a proper dependence function by definition. 
Note also
that the projection improves the accuracy of the estimate if the true
dependence function is indeed in $\mathcal{A}_{p}$. 
Related improvements for projections on infinite-dimensional spaces of 
spectral measures and approximations by sieve methods have been considered in 
\cite{Fils-Villetard/Guillou/Segers:2008}.

Specifically, we may start for instance with the following nonparametric
estimator $\widehat{A}$ considered in \cite{Gudendorf/Segers:2011},
\cite{Zhang/Wells/Peng:2008}, \cite{Caperaa/Fougeres/Genest:1997},
\cite{Deheuvels:1991} and \cite{Pickands:1981}. Suppose that the dependence
structure of $\mathbf{X}\in\mathbb{R}^{d}$ ($i=1,\ldots,n$) is characterized
by an extreme value copula $C\left(  \cdot,\mathbf{\theta}\right)
\in\mathcal{C}_{p}$, and, as before, the marginals are denoted by
$F_{1},\ldots,F_{d}$. Given $n$ iid realizations $\mathbf{X}_{i}$
($i=1,2,\ldots,n$), define $\mathbf{Y}_{i}=\left(  Y_{i,1},\ldots
,Y_{i,d}\right)  ^{T}$ by
\[
Y_{i,j}=-\log F_{j}\left(  X_{i,j}\right)  .
\]
Then $Y_{i,j}$ are standard exponential random variables, and
\begin{align*}
P\left(  Y_{i,1}>y_{1},\ldots,Y_{i,d}>y_{d}\right)   &  =C\left(  e^{-y_{1}%
},\ldots,e^{-y_{d}},\mathbf{\theta}\right)  =\exp(-\ell(\mathbf{y}%
,\mathbf{\theta}))\\
&  =\exp(-\left\|  \mathbf{y}\right\|  _{1}A(\mathbf{w},\mathbf{\theta}))
\end{align*}
for $\mathbf{y}\in\lbrack0,\infty)^{d}$ (and $\mathbf{w}=\mathbf{y}/\left\|
\mathbf{y}\right\|  _{1}\in\Delta_{d}$). Hence, for%
\[
\xi_{i}(\mathbf{w}):=\min_{j=1,..,d}\frac{Y_{i,j}}{w_{j}}\quad(i=1,\ldots,n)
\]
with $\mathbf{w}\in\Delta_{d}$ one obtains
\[
P(\xi_{i}(\mathbf{w})>t)=P\left(  Y_{i,1}>w_{1}t,\ldots,Y_{i,d}>w_{d}t\right)
=\exp(-tA(\mathbf{w})).
\]
This means that, for any fixed $\mathbf{w}\in\Delta_{d}$, $\xi_{1}%
(\mathbf{w}),\ldots,\xi_{n}(\mathbf{w})$ are iid exponentially distributed
with mean $1/A(\mathbf{w})$, and $-\log(\xi_{i}(\mathbf{w}))$ ($i=1,\ldots,n$)
are iid Gumbel distributed with location parameter $\log A(\mathbf{w})$. In
particular,%
\[
E(-\log\xi_{i}(\mathbf{w}))=\log A(\mathbf{w})+\gamma
\]
where $\gamma$ is the expectation of the standard Gumbel distribution
(i.e. $\gamma=\Gamma^{\prime}(1)\approx 0.5772$, the Euler-Mascheroni
constant). A nonparametric estimator of $A$ may therefore be defined by (see
\cite{Pickands:1981}, {\small \cite{Deheuvels:1991}, \cite{Hall/Tajvidi:2000},
\cite{Fils-Villetard/Guillou/Segers:2008}, \cite{Gudendorf/Segers:2011})}
\begin{equation}
\log\widehat{A}(\mathbf{w})=-\frac{1}{n}\sum_{i=1}^{n}\log\xi_{i}%
(w)-\gamma\text{ }(\mathbf{w}\in\Delta_{d}). \label{Eq:Initial_Estimator}%
\end{equation}
Unfortunately, $\widehat{A}$ satisfies \eqref{eq:tail_dep_func} 
and~\eqref{Eq:Spectral_Measure_Condition_A} only by chance. 
The standardization~\eqref{Eq:Spectral_Measure_Condition_A} can be achieved by modifications of $\log \widehat{A}$ that substract 
a suitable linear combination of $\log\widehat{A}(\mathbf{w})$ evaluated at certain values of $\mathbf{w}$ (cf.\ \cite{Caperaa/Fougeres/Genest:1997}, \cite{Zhang/Wells/Peng:2008}, \cite{Gudendorf/Segers:2011}).
In particular,
\cite{Gudendorf/Segers:2011} defines a nonparametric least squares
estimator (nonparametric OLS) $\widehat{A}_{OLS}(\mathbf{w})$ by 
$\log\widehat{A}_{OLS}(\mathbf{w})=\hat{\beta}_{0}(\mathbf{w})$ 
where $\hat{\beta}_{0}(\mathbf{w})$ is obtained by least squares regression of
$\log\xi_{i}(\mathbf{w})-\gamma$ ($i=1,2,...,n$) on $-\log\xi_{i}%
(\mathbf{e}_{1})-\gamma,...,-\log\xi_{i}(\mathbf{e}_{d})-\gamma$. 
The resulting estimate satisfies~\eqref{Eq:Spectral_Measure_Condition_A}, 
but it still may fail to satisfy~\eqref{eq:tail_dep_func}. 

In contrast, in the parametric approach introduced above a modification is not
needed, because the projection on $\mathcal{A}_{p}$ automatically
leads to a proper Pickands dependence function. 
However, the examples below demonstrate that 
the initial estimator remains crucial for both 
the asymptotic distribution and the finite sample behaviour of 
the parametric projection. 
Specifically, we will compare the parametric approach based 
on $\widehat{A}$ and $\widehat{A}_{OLS}$ respectively.


Classical results on empirical processes yield a functional central limit
theorem of the following form (see \cite{Gudendorf/Segers:2011} and references
therein). Let $\mathcal{C}(\Delta_{d})$ denote the Banach space of real-valued
continuous functions on $\Delta_{d}$ equipped with the supremum norm.
Then, as $n\rightarrow\infty$,
\begin{equation}
\sqrt{n}\left(  \widehat{A}\left(  \mathbf{w}\right)  
-A\left(  \mathbf{w}\right) \right)  
\overset{\mathrm{w}}{\rightarrow}A\left(  \mathbf{w}\right)
\zeta\left(  \mathbf{w}\right)  \text{ }=:\zeta_{A,\text{nonp}}(\mathbf{w})
\text{ in }\mathcal{C}(\Delta_{d}) 
\label{FCLT_for_the_raw_estimator}%
\end{equation}
where $\zeta\left(  \mathbf{w}\right)$ 
($\mathbf{w}\in\Delta_{d}$) is a zero
mean Gaussian process with covariance function
\[
\gamma_{\zeta}\left(  \mathbf{v},\mathbf{w}\right)  
:=\mathrm{cov}(\mathbf{\zeta}(\mathbf{v}),\mathbf{\zeta}(\mathbf{w}))
=\mathrm{cov}\left( -\log\xi(\mathbf{v}),-\log\xi(\mathbf{w})\right)  
\text{ \ }(\mathbf{v},\mathbf{w}\in\Delta_{d}).
\]
Note that the joint distribution of $\xi_{i}(\mathbf{v}),\xi_{i}(\mathbf{w})$
does not depend on $i$, so that dropping the index $i$ here does not lead to 
confusion. This result implies in particular that, for large $n$ and
$\mathbf{w}_{1},\ldots,\mathbf{w}_{N}\in\Delta_{d}$, the joint distribution of
$\widehat{A}(\mathbf{w}_{1}),\ldots,\widehat{A}(\mathbf{w}_{N})$ can be
approximated by an $N-$dimensional normal distribution with mean
\[
\mu\left( \mathbf{w}_{1,}\ldots,\mathbf{w}_{N}\right)  
=(A(\mathbf{w}_{1}),\ldots,A(\mathbf{w}_{N}))^{T}
\]
and covariance matrix 
$n^{-1}\Sigma
=n^{-1}\left[ 
\sigma\left(  \mathbf{w}_{i},\mathbf{w}_{j}\right)  
\right]_{i,j=1,\ldots,N}$ 
with
\begin{align}
\sigma\left( \mathbf{w}_{i},\mathbf{w}_{j}\right)   
&  = \nonumber
A(\mathbf{w}_{i})A(\mathbf{w}_{j})
\gamma_{\zeta}\left(  \mathbf{w}_{i},\mathbf{w}_{j}\right) \\
&  = \label{eq:covar_structure}
A(\mathbf{w}_{i})A(\mathbf{w}_{j})
\mathrm{cov}\left(  -\log\xi(\mathbf{w}_{i}),-\log\xi(\mathbf{w}_{j})\right).
\end{align}
Now, 
given a parametric class of spectral measures $\mathcal{P}_{p}$ based on
$\Psi_{1}\ldots,\Psi_{p}$, an estimator of the corresponding Pickands
dependence function 
$A\left(  \cdot,\mathbf{\theta}\right)  \in\mathcal{A}_{p}$ 
can be defined as follows. Let $\widehat{A}$ be the preliminary
estimator in (\ref{Eq:Initial_Estimator}), and denote by $A_{i}$
($i=1,\ldots,p$) the Pickands dependence functions corresponding to the 
spectral measures $\Psi_{i}$ ($i=1,\ldots,p$). Define a grid of 
$\mathbf{w}$-values $\mathbf{w}_{1}^{0},\ldots,\mathbf{w}_{N}^{0}\in\Delta_{d}$, 
and the vector 
$\mathbf{\hat{a}}^{0}=\left( \hat{a}_{1}^{0},\ldots,\hat{a}_{N}^{0}\right)^{T}$ 
with $\hat{a}_{i}^{0}=\hat{a}(\mathbf{w}_{i}^{0})$ ($i=1,\ldots,N$) and
\[
\hat{a}(\mathbf{w})=\widehat{A}\left(  \mathbf{w}\right)  -A_{p}\left(
\mathbf{w}\right)  ,\quad\mathbf{w}\in\Delta_{d}.
\]
Furthermore, define the $N\times\left(  p-1\right)$ matrix 
$H^{0}=\left[ h_{j}(\mathbf{w}_{i}^{0})\right]_{i=1,...,N;j=1,...,p-1}$ with
\[
h_{j}(\mathbf{w})
=A_{j}\left( \mathbf{w}\right)-A_{p}\left(  \mathbf{w}\right),
\quad\mathbf{w}\in\Delta_{d}.
\]
Then the least squares estimator of $\mathbf{\theta}$ is equal to
\[
\mathbf{\hat{\theta}}=Q^{0}\mathbf{\hat{a}}^{0}
\]
where $Q^{0}=(  {H^{0}}^{T}{H^{0}})^{-1}{H^{0}}^{T}$. Since the
dependence functions $A_{j}\left(  \mathbf{w}\right)$ are defined for all
$\mathbf{w}\in\Delta_{d}$, an estimate of $A\left( \mathbf{w}\right)$ is
available for any $\mathbf{w}\in\Delta_{p}$ by setting
\[
A(  \mathbf{w},\mathbf{\hat{\theta}})  
=A_{p}\left( \mathbf{w}\right)  
+\mathbf{h}^{T}\left(  \mathbf{w}\right)  
\mathbf{\hat{\theta}}
=A_{p}\left(  \mathbf{w}\right)  
+\mathbf{h}^{T}\left( \mathbf{w}\right) Q^{0}\mathbf{\hat{a}}^{0}
\]
where 
$\mathbf{h}\left( \mathbf{w}\right)  
=\left(  h_{1}(\mathbf{w}),\ldots,h_{p-1}(\mathbf{w})\right)^{T}$. 

Letting $N$ tend to infinity, an
estimator of $\mathbf{\theta}$ based on all values in $\Delta_{d}$ can be
obtained as follows. Suppose that the grid 
$\mathbf{w}_{1}^{0},\ldots,\mathbf{w}_{N}^{0}$ is chosen such that, as 
$N\rightarrow\infty$,
the point measure 
$M_{N}(B)
=
N^{-1}\sum_{i=1}^{N} 1\left\{ \mathbf{w}_{i}^{0} \in B \right\}$ 
($B\in\mathcal{B} (\Delta_{d})$) converges weakly 
to a probability measure $M$ on $\Delta_{d}$ with Lebesgue density
$m(\cdot)$. 
This can be achieved by deterministic or by random choice (by sampling
from $M$) of the grid. 
Then
\[
N^{-1}\left(  {H^{0}}^{T}{H^{0}}\right)  _{i,j}
=
N^{-1}\sum_{l=1}^{N}%
h_{i}(\mathbf{w}_{l}^{0})h_{j}(\mathbf{w}_{l}^{0})
=
\int_{\Delta_d} h_{i}(\mathbf{w}) h_{j}(\mathbf{w}) d M_N(\mathbf{w})
\]
converges to
\[
s_{i,j}
=\int_{\Delta_{d}}h_{i}(\mathbf{w})h_{j}(\mathbf{w}) m(\mathbf{w})d\mathbf{w}.
\]
This follows from the continuity of all $A_j$ (and hence all $h_j$)  
and from the compactness of $\Delta_d$.
Similarly, the limit of
\[
N^{-1}\left(  {H^{0}}^{T}\mathbf{\hat{a}}^{0}\right)  _{j}=N^{-1}\sum
_{l=1}^{N}h_{j}(\mathbf{w}_{l}^{0})\hat{a}(\mathbf{w}_{l}^{0})
\]
is
\[
r_{j}=\int_{\Delta_{d}}h_{j}(\mathbf{w})\hat{a}(\mathbf{w}) m(\mathbf{w})d\mathbf{w}.
\]
Thus we obtain an estimator that depends on the density function $m$,
\[
\mathbf{\hat{\theta}}_{M}=S^{-1}\mathbf{r}
\]
where $S=\left(  s_{i,j}\right)  _{i,j=1,\ldots,p-1}$ and 
$\mathbf{r}=\left( r_{1},\ldots,r_{p-1}\right)^{T}$. 
Even more generally, the previous estimators
can be seen as special cases of
\begin{equation}
\mathbf{\hat{\theta}}_{M}=S^{-1}\mathbf{r} 
\label{Eq:New_Estimator_1}
\end{equation}
where
\begin{align}
s_{i,j}  &  
=\int_{\Delta_{d}}
h_{i}(\mathbf{w})h_{j}(\mathbf{w})dM(\mathbf{w}), \nonumber\\
r_{j}  &  
=\int_{\Delta_{d}}h_{j}(\mathbf{w})\hat{a}(\mathbf{w})dM(\mathbf{w})
\nonumber
\end{align}
and $M$ is any distribution function on $\Delta_{d}$ such that $S$ is of full rank.

The same approach can be applied to any initial nonparametric estimator of $A$
for which a functional limit theorem is available. In particular, for
the nonparametric OLS, $\widehat{A}_{OLS}$, Gudendorf and Segers
(\cite{Gudendorf/Segers:2011}) obtain
\begin{align*}
\sqrt{n}\left( \widehat{A}_{OLS}\left( \mathbf{w}\right)  
-A\left( \mathbf{w}\right)  \right)  
&\overset{\mathrm{w}}{\rightarrow}
A\left( \mathbf{w}\right)  
\left[  \zeta\left(  \mathbf{w}\right)  
-\lambda_{opt}^{T}(\mathbf{w})\mathbf{\zeta}\left( \mathbf{e}\right) \right] \\
&=:\zeta_{A,OLS,\text{nonp}}(\mathbf{w})
\text{ in }\mathcal{C}(\Delta_{d})
\end{align*}
where $\zeta\left( \mathbf{w}\right)$ is the Gaussian process defined in
(\ref{FCLT_for_the_raw_estimator}), 
$\mathbf{\zeta}\left(  \mathbf{e}\right)
=\left(  \zeta\left(  \mathbf{e}_{1}\right),...,
\zeta\left(  \mathbf{e}_{d}\right)  \right)^{T}$, 
$\lambda_{opt}(\mathbf{w})
=\Sigma^{-1}E\left[ \mathbf{\zeta}\left(  \mathbf{e}\right) \zeta(\mathbf{w})\right]$ 
and
$\Sigma=$ 
$E\left[ \mathbf{\zeta}\left( \mathbf{e}\right)  
\mathbf{\zeta}^{T}\left( \mathbf{e}\right) \right]$. 
Applying the parametric approach, we define as before
\begin{equation}
\mathbf{\hat{\theta}}_{M,OLS}=S^{-1}\mathbf{r}_{OLS}
\label{Eq:Second_New_Estimator_1}
\end{equation}
with
\begin{equation*}
r_{OLS,j}=\int_{\Delta_{d}}h_{j}(\mathbf{w})\hat{a}_{OLS}(\mathbf{w})dM(\mathbf{w}) 
\end{equation*}
and
\[
\hat{a}_{OLS}(\mathbf{w})
=\widehat{A}_{OLS}\left(  \mathbf{w}\right)
-A_{p}\left(  \mathbf{w}\right),\quad\mathbf{w}\in\Delta_{d}.
\]

\section{Asymptotic results}

We will use the notation 
$\sigma\left(  \mathbf{v},\mathbf{w}\right)
=\mathrm{cov}\left(  
\zeta_{A,\text{nonp}}(\mathbf{v}),\zeta_{A,\text{nonp}}
(\mathbf{w})\right)$ 
and 
$\sigma_{OLS}\left(  \mathbf{v},\mathbf{w}\right)
=\mathrm{cov}\left(  \zeta_{A,OLS,\text{nonp}}(\mathbf{v}),
\zeta_{A,OLS,\text{nonp}}(\mathbf{w})\right)$ 
for the asymptotic covariance functions of $\hat{A}$
and $\hat{A}_{OLS}$ respectively. The asymptotic distributions of
$\mathbf{\hat{\theta}}_{M}$ and $\mathbf{\hat{\theta}}_{M,OLS}$ are given by

\begin{theorem}
\label{thm:1} 
Let $\mathbf{X}_{i}=\left( X_{i,1},\ldots,X_{i,d}\right)^{T}\in\mathbb{R}^{d}$ 
($i=1,2,\ldots,n$) be iid realizations of a
$d$-dimensional random vector $X$ with marginal distributions 
$F_{1},\ldots,F_{d}$ and extreme value copula 
$C\left( \cdot,\mathbf{\theta}^{0}\right)\in\mathcal{C}_{p}$. 
Denote by $A_{j}$ ($j=1,\ldots,p$) the Pickands dependence functions 
defining $\mathcal{C}_{p}$,
and let $\mathbf{\hat{\theta}}_{M}$ be defined by (\ref{Eq:New_Estimator_1})
and $\mathbf{\hat{\theta}}_{M,OLS}$ by (\ref{Eq:Second_New_Estimator_1}),
where $M$ is such that $S$ is of full rank. Suppose furthermore that
%
the parameter $\mathbf{\theta}$ is identifiable and 
%
$\mathbf{\theta}^{0}$ is in the interior of the parameter space 
$\Theta
=\left\{  \mathbf{\theta}=\left(  \theta_{1},\ldots,\theta_{p-1}\right)^{T}
\in\mathbb{R}_{+}^{p-1}:\left\|  \mathbf{\theta}\right\|_{1}\leq1\right\}$. 
Then, as $n\rightarrow\infty$, $\mathbf{\hat{\theta}}_{M}$
and $\mathbf{\hat{\theta}}_{M,OLS}$ converge to $\mathbf{\theta}^{0}$ in
probability, and
\begin{align*}
&  \sqrt{n}\left( \mathbf{\hat{\theta}}_{M}-\mathbf{\theta}^{0}\right)
\overset{\mathrm{w}}{\rightarrow}\mathbf{Z,}\\
&  \sqrt{n}\left( \mathbf{\hat{\theta}}_{M,OLS}-\mathbf{\theta}^{0}\right)
\overset{\mathrm{w}}{\rightarrow}\mathbf{Z}_{OLS}
\end{align*}
where $\mathbf{Z}$ and $\mathbf{Z}_{OLS}$ are $(p-1)$-dimensional normal
random vectors with zero mean and covariance matrices
\begin{align*}
\mathrm{cov}(\mathbf{Z})  &  =V=S^{-1}\Omega \left(S^{-1}\right)^{T},\\
\mathrm{cov}(\mathbf{Z}_{OLS})  &  =V_{OLS}=S^{-1}\Omega_{OLS}\left(S^{-1}\right)^{T}
\end{align*}
where $\Omega=\left[  \omega_{j,l}\right]  _{j,l=1,\ldots,p-1}$ and
$\Omega_{OLS}=\left[  \omega_{j,l}^{OLS}\right]  _{j,l=1,\ldots,p-1}$ are
defined by
\begin{align}
\omega_{j,l}  
&  :=  \label{eq:covar_integr}
\int_{\Delta_{d}}\int_{\Delta_{d}}h_{j}(\mathbf{v})h_{l}(\mathbf{w})
\sigma(\mathbf{v},\mathbf{w})\,dM(\mathbf{v})dM(\mathbf{w}),\\
\omega_{j,l}^{OLS}  
& :=  \nonumber
\int_{\Delta_{d}}\int_{\Delta_{d}}h_{j}%
(\mathbf{v})h_{l}(\mathbf{w})\sigma_{OLS}(\mathbf{v},\mathbf{w})\,dM(\mathbf{v})dM(\mathbf{w}).
\end{align}
\end{theorem}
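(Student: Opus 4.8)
The plan is to exploit the fact that, under the hypothesis $C(\cdot,\mathbf{\theta}^{0})\in\mathcal{C}_{p}$, the estimators $\mathbf{\hat{\theta}}_{M}$ in (\ref{Eq:New_Estimator_1}) and $\mathbf{\hat{\theta}}_{M,OLS}$ in (\ref{Eq:Second_New_Estimator_1}) are \emph{exact} linear functionals of the initial estimators $\widehat{A}$ and $\widehat{A}_{OLS}$, so that no Taylor expansion or linearization is needed. Since $C(\cdot,\mathbf{\theta}^{0})\in\mathcal{C}_{p}$, the true Pickands function satisfies $A(\mathbf{w})=A_{p}(\mathbf{w})+\mathbf{h}^{T}(\mathbf{w})\mathbf{\theta}^{0}$ for every $\mathbf{w}\in\Delta_{d}$. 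Setting $r_{j}^{0}:=\int_{\Delta_{d}}h_{j}(\mathbf{w})\bigl(A(\mathbf{w})-A_{p}(\mathbf{w})\bigr)\,dM(\mathbf{w})$ and using the definition of $S$, one gets $\mathbf{r}^{0}=S\mathbf{\theta}^{0}$, hence $S^{-1}\mathbf{r}^{0}=\mathbf{\theta}^{0}$; here the assumption that $S$ is of full rank guarantees that $S^{-1}$ exists, and identifiability of $\mathbf{\theta}$ guarantees that $\mathbf{\theta}^{0}$ is the unique parameter compatible with the data. Subtracting $\mathbf{\theta}^{0}=S^{-1}\mathbf{r}^{0}$ from $\mathbf{\hat{\theta}}_{M}=S^{-1}\mathbf{r}$ yields the key identity
\[
\sqrt{n}\left(\mathbf{\hat{\theta}}_{M}-\mathbf{\theta}^{0}\right)=S^{-1}\int_{\Delta_{d}}\mathbf{h}(\mathbf{w})\,\sqrt{n}\bigl(\widehat{A}(\mathbf{w})-A(\mathbf{w})\bigr)\,dM(\mathbf{w}),
\]
and the same identity with $\widehat{A}_{OLS}$ in place of $\widehat{A}$.

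Next I would transport the functional central limit theorems through this identity. Define $\Phi:\mathcal{C}(\Delta_{d})\to\mathbb{R}^{p-1}$ by $\Phi(g)=\int_{\Delta_{d}}\mathbf{h}(\mathbf{w})g(\mathbf{w})\,dM(\mathbf{w})$. Each $h_{j}=A_{j}-A_{p}$ is continuous on the compact set $\Delta_{d}$, hence bounded, and $M$ is a probability measure, so $\Vert\Phi(g)\Vert\leq c\,\Vert g\Vert_{\infty}$ for a constant $c$ depending only on the $h_{j}$; thus $\Phi$ is a bounded, and therefore continuous, linear map on the separable Banach space $\mathcal{C}(\Delta_{d})$. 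Applying the continuous mapping theorem to (\ref{FCLT_for_the_raw_estimator}) gives
\[
\sqrt{n}\left(\mathbf{\hat{\theta}}_{M}-\mathbf{\theta}^{0}\right)=S^{-1}\Phi\bigl(\sqrt{n}(\widehat{A}-A)\bigr)\overset{\mathrm{w}}{\rightarrow}S^{-1}\Phi(\zeta_{A,\text{nonp}})=:\mathbf{Z},
\]
and likewise $\sqrt{n}(\mathbf{\hat{\theta}}_{M,OLS}-\mathbf{\theta}^{0})\overset{\mathrm{w}}{\rightarrow}S^{-1}\Phi(\zeta_{A,OLS,\text{nonp}})=:\mathbf{Z}_{OLS}$, invoking the limit theorem for $\widehat{A}_{OLS}$ recalled above from \cite{Gudendorf/Segers:2011}. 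Consistency is then immediate, since weak convergence of $\sqrt{n}(\mathbf{\hat{\theta}}_{M}-\mathbf{\theta}^{0})$ to a tight limit forces $\mathbf{\hat{\theta}}_{M}-\mathbf{\theta}^{0}$ (and similarly $\mathbf{\hat{\theta}}_{M,OLS}-\mathbf{\theta}^{0}$) to converge to $\mathbf{0}$ in probability.

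It remains to identify the laws of $\mathbf{Z}$ and $\mathbf{Z}_{OLS}$. I would show that $\Phi$ maps the continuous zero-mean Gaussian process $\zeta_{A,\text{nonp}}$ to a zero-mean Gaussian vector: approximating $\Phi(\zeta_{A,\text{nonp}})$ by Riemann-type sums $\sum_{k}\mathbf{h}(\mathbf{w}_{k})\zeta_{A,\text{nonp}}(\mathbf{w}_{k})M(B_{k})$ over partitions of $\Delta_{d}$, each sum is a linear image of a jointly Gaussian vector, hence Gaussian, and the sums converge almost surely and in $L^{2}$ by uniform continuity of the integrand on the compact $\Delta_{d}$; an $L^{2}$-limit of Gaussians is Gaussian, and the vanishing of the mean is inherited. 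The covariance is computed by interchanging expectation and the double integral,
\[
\mathrm{cov}\bigl(\Phi_{j},\Phi_{l}\bigr)=\int_{\Delta_{d}}\int_{\Delta_{d}}h_{j}(\mathbf{v})h_{l}(\mathbf{w})\,\sigma(\mathbf{v},\mathbf{w})\,dM(\mathbf{v})dM(\mathbf{w})=\omega_{j,l},
\]
which is legitimate by Fubini's theorem because $(\mathbf{v},\mathbf{w})\mapsto h_{j}(\mathbf{v})h_{l}(\mathbf{w})\sigma(\mathbf{v},\mathbf{w})$ is continuous on the compact $\Delta_{d}\times\Delta_{d}$ and $M\otimes M$ is finite. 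Hence $\Phi(\zeta_{A,\text{nonp}})\sim N(\mathbf{0},\Omega)$ with $\Omega$ as in (\ref{eq:covar_integr}), and $\mathbf{Z}=S^{-1}\Phi(\zeta_{A,\text{nonp}})\sim N\bigl(\mathbf{0},S^{-1}\Omega(S^{-1})^{T}\bigr)$; replacing $\sigma$ by $\sigma_{OLS}$ throughout gives $\mathbf{Z}_{OLS}\sim N\bigl(\mathbf{0},S^{-1}\Omega_{OLS}(S^{-1})^{T}\bigr)$.

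The algebra of the first step is routine. The points requiring genuine care are the continuity of the functional $\Phi$ on $\mathcal{C}(\Delta_{d})$ and the Gaussianity of its image of the limit process together with the Fubini interchange producing $\Omega$ and $\Omega_{OLS}$ — this last is what I would regard as the main, though still mild, obstacle, since it relies on continuity of the covariance kernels $\sigma$ and $\sigma_{OLS}$, which in turn follows from continuity of $\gamma_{\zeta}$ and of the $A_{j}$. I would also note that the hypothesis $\mathbf{\theta}^{0}\in\operatorname{int}\Theta$ plays no role in the central limit theorem itself — the estimators are explicit unconstrained linear functionals of the initial estimators — but is needed to guarantee that $\mathbf{\hat{\theta}}_{M}$ and $\mathbf{\hat{\theta}}_{M,OLS}$ lie in $\Theta$ with probability tending to one, so that the induced functions $A(\cdot,\mathbf{\hat{\theta}}_{M})$ are genuine Pickands dependence functions.
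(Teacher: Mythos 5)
Your proposal is correct and follows essentially the same route as the paper's proof: the exact linear identity $\sqrt{n}(\mathbf{\hat{\theta}}_{M}-\mathbf{\theta}^{0})=S^{-1}\phi\bigl(\sqrt{n}(\widehat{A}-A)\bigr)$ with $\phi$ a continuous linear functional on $\mathcal{C}(\Delta_{d})$, followed by the continuous mapping theorem applied to the functional CLT and a Fubini argument for the covariance matrix. Your additional details (the Riemann-sum justification of Gaussianity of $\phi(\zeta_{A,\text{nonp}})$ and the observation that $\mathbf{\theta}^{0}\in\operatorname{int}\Theta$ is not needed for the CLT itself) are sound elaborations of steps the paper treats as immediate.
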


\begin{proof} 
Since the proof for $\hat{\theta}_M$ and $\hat{\theta}_{M,OLS}$ 
is the same, it is stated for the first estimator only. 
We have 
\begin{align*}
\hat{\mathbf{\theta}}_M 
&= 
S^{-1} \int_{\Delta_d} \mathbf{h}(\mathbf{w})
\left(
\widehat{A}(\mathbf{w}) - A_p(\mathbf{w})
\right) 
d M(\mathbf{w})\\
&= S^{-1} \phi \left( \widehat{A} - A_p \right),
\end{align*}
where $\phi(f)= \int_{\Delta_d} \mathbf{h}(\mathbf{w}) f(\mathbf{w}) dM(\mathbf{w})$ is a linear mapping from $\mathcal{C}(\Delta_d)$ into $\mathbb{R}^{p-1}$. 
Analogously, we have $\mathbf{\theta}^0 = S^{-1}\phi(A - A_p)$, and hence 
\[
\sqrt{n} \left( \hat{\mathbf{\theta}} - \mathbf{\theta}^0 \right) 
= 
S^{-1} \phi \left( \widehat{A} - A \right). 
\]
It is obvious that the mapping $f\mapsto S^{-1}\phi(f)$ is continuous. 
Hence the functional Central Limit Theorem 
\eqref{FCLT_for_the_raw_estimator} for $\widehat{A}$ 
and the Continuous Mapping Theorem 
yield 
\[
\sqrt{n} \left( \hat{\mathbf{\theta}} - \mathbf{\theta}^{0} \right) \stackrel{\mathrm{w}}{\to} 
S^{-1}\phi \left(\zeta_{A,nonp}\right)
=: \mathbf{Z}.
\]
Recall that $\zeta_{A,nonp}$ is a zero-mean Gaussian process 
with covariance function $\sigma(\mathbf{v},\mathbf{w}) = E [\zeta_{A,nonp} (\mathbf{v}) \zeta_{A,nonp} (\mathbf{w})]$ introduced in~\eqref{eq:covar_structure}. 
Hence, as a linear mapping of $\zeta_{A,nonp}$, 
the random vector $\mathbf{Z}$ is Gaussian with zero mean and  
covariance matrix $V=S^{-1} \Omega (S^{-1})^{T}$,
where $\Omega=[\omega_{i,j}]_{i,j=1,\ldots,1-p}$ 
is the covariance matrix of $\phi(\zeta_{A,nonp})$. 
The representation~\eqref{eq:covar_integr} follows from Fubini's Theorem: 
\begin{align*}
\omega_{i,j} 
&= 
E \left[
\int_{\Delta_d}h_i (\mathbf{v}) \zeta_{A,nonp} (\mathbf{v}) dM(\mathbf{v})
\int_{\Delta_d}h_j(\mathbf{w}) \zeta_{A,nonp} (\mathbf{w}) dM(\mathbf{w})
\right] \\
&=
\int_{\Delta_d}\int_{\Delta_d} h_i(\mathbf{v}) h_j (\mathbf{w}) 
E [\zeta_{A,nonp} (\mathbf{v}) \zeta_{A,nonp} (\mathbf{w})] 
dM(\mathbf{v}) dM(\mathbf{w}).
\end{align*} 
\end{proof} 

An immediate consequence of this result is the asymptotic normality of
$A(  \mathbf{w},\mathbf{\hat{\theta}}_{M})  
=A_{p}(\mathbf{w})+\mathbf{h}^{T}\left(  \mathbf{w}\right)  \mathbf{\hat
{\theta}}_{M}$ and 
$A(  \mathbf{w},\mathbf{\hat{\theta}}_{M,OLS})
=A_{p}\left(  \mathbf{w}\right)  +\mathbf{h}^{T}\left(  \mathbf{w}\right)
\mathbf{\hat{\theta}}_{M,OLS}$
uniformly in $\mathbf{w}\in\Delta_d$.

\begin{corollary}
\label{cor:1} Under the assumptions of Theorem~\ref{thm:1} we have, as
$n\rightarrow\infty$,
\begin{equation}
\sqrt{n}\left(  A\left(  \mathbf{w},\mathbf{\hat{\theta}}_{M}\right)
-A\left(  \mathbf{w},\mathbf{\theta}^{0}\right)  \right)  \overset{\mathrm{w}%
}{\rightarrow}\zeta_{A}\left(  \mathbf{w}\right)  , \label{eq:AN_for_A(.)}%
\end{equation}%
\begin{equation*}
\sqrt{n}\left(  A\left(  \mathbf{w},\mathbf{\hat{\theta}}_{M,OLS}\right)
-A\left(  \mathbf{w},\mathbf{\theta}^{0}\right)  \right)  \overset{\mathrm{w}%
}{\rightarrow}\zeta_{A,OLS}\left(  \mathbf{w}\right)
\end{equation*}
where $\zeta_{A}\left(  \mathbf{w}\right)  $, $\zeta_{A,OLS}\left(
\mathbf{w}\right)  $ ($\mathbf{w\in}\Delta_{d}$) are zero-mean Gaussian
process with covariance functions%
\[
\gamma_{A}\left(  \mathbf{v},\mathbf{w}\right)  =\mathbf{h}^{T}\left(
\mathbf{v}\right)  V\mathbf{h}\left(  \mathbf{w}\right)  ,
\]
and
\[
\gamma_{A}\left(  \mathbf{v},\mathbf{w}\right)  =\mathbf{h}^{T}\left(
\mathbf{v}\right)  V_{OLS}\mathbf{h}\left(  \mathbf{w}\right)  ,
\]
where $V$ and $V_{OLS}$ are as in Theorem~\ref{thm:1}. 
\end{corollary}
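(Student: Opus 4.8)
The plan is to derive Corollary~\ref{cor:1} as a direct consequence of Theorem~\ref{thm:1} together with the Continuous Mapping Theorem, essentially by composing the weak limit $\sqrt{n}(\hat{\mathbf{\theta}}_M - \mathbf{\theta}^0) \overset{\mathrm{w}}{\to} \mathbf{Z}$ with the affine-in-$\mathbf{\theta}$ map $\mathbf{\theta} \mapsto A(\cdot,\mathbf{\theta}) = A_p(\cdot) + \mathbf{h}^T(\cdot)\mathbf{\theta}$, viewed as a map into $\mathcal{C}(\Delta_d)$. First I would write out the key identity
\[
\sqrt{n}\left( A(\mathbf{w},\hat{\mathbf{\theta}}_M) - A(\mathbf{w},\mathbf{\theta}^0) \right) = \mathbf{h}^T(\mathbf{w})\,\sqrt{n}\left( \hat{\mathbf{\theta}}_M - \mathbf{\theta}^0 \right),
\]
which holds because the $A_p(\mathbf{w})$ terms cancel and the dependence of $A(\mathbf{w},\cdot)$ on $\mathbf{\theta}$ is linear; the same identity holds verbatim with $\hat{\mathbf{\theta}}_{M,OLS}$ in place of $\hat{\mathbf{\theta}}_M$.

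Next I would note that, since $p-1$ is fixed and each $h_j = A_j - A_p$ is continuous on the compact simplex $\Delta_d$, the linear map $\mathbf{z} \mapsto \mathbf{h}^T(\cdot)\mathbf{z}$ from $\mathbb{R}^{p-1}$ into $\mathcal{C}(\Delta_d)$ is bounded, hence continuous; indeed $\sup_{\mathbf{w}} |\mathbf{h}^T(\mathbf{w})\mathbf{z}| \le \big(\sum_j \sup_{\mathbf{w}}|h_j(\mathbf{w})|\big)\|\mathbf{z}\|_1$. Applying the Continuous Mapping Theorem to the weak convergence $\sqrt{n}(\hat{\mathbf{\theta}}_M - \mathbf{\theta}^0) \overset{\mathrm{w}}{\to} \mathbf{Z}$ from Theorem~\ref{thm:1} then gives
\[
\sqrt{n}\left( A(\cdot,\hat{\mathbf{\theta}}_M) - A(\cdot,\mathbf{\theta}^0) \right) \overset{\mathrm{w}}{\to} \zeta_A := \mathbf{h}^T(\cdot)\mathbf{Z} \quad \text{in } \mathcal{C}(\Delta_d),
\]
and the pointwise statement~\eqref{eq:AN_for_A(.)} is a further consequence (e.g. via the projection $f \mapsto f(\mathbf{w})$). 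Since $\mathbf{Z}$ is a centered Gaussian vector with covariance $V$, the process $\zeta_A(\mathbf{w}) = \mathbf{h}^T(\mathbf{w})\mathbf{Z}$ is centered Gaussian with covariance
\[
\gamma_A(\mathbf{v},\mathbf{w}) = \mathbf{h}^T(\mathbf{v})\,E[\mathbf{Z}\mathbf{Z}^T]\,\mathbf{h}(\mathbf{w}) = \mathbf{h}^T(\mathbf{v})\,V\,\mathbf{h}(\mathbf{w}),
\]
as claimed; the OLS case is identical with $V$ replaced by $V_{OLS}$.

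There is essentially no serious obstacle here: the work is entirely a bookkeeping composition of an already-established functional CLT with a bounded linear map, and the Gaussianity and covariance of the limit follow because linear images of Gaussian vectors are Gaussian. The only point requiring a word of care is the choice of topology — one should state the convergence in $\mathcal{C}(\Delta_d)$ under the supremum norm, so that ``uniformly in $\mathbf{w}$'' is justified, and check that the map into $\mathcal{C}(\Delta_d)$ is continuous with respect to that norm, which is immediate from the finite sum $\sum_j \|h_j\|_\infty < \infty$ and the fact that $p$ is fixed (not growing with $n$). I would also remark in passing that, since $\zeta_A$ is a linear functional of the process $\zeta_{A,\mathrm{nonp}}$ (through $\mathbf{Z} = S^{-1}\phi(\zeta_{A,\mathrm{nonp}})$ from the proof of Theorem~\ref{thm:1}), one can alternatively write $\gamma_A$ directly in terms of the double integral of $h_j h_l \sigma$ against $M \otimes M$, but expressing it compactly as $\mathbf{h}^T(\mathbf{v})V\mathbf{h}(\mathbf{w})$ is cleanest.
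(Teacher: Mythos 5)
Your proof is correct and follows essentially the same route as the paper's: both express $A(\cdot,\mathbf{\theta})$ as the affine map $\mathbf{\theta}\mapsto A_p(\cdot)+\mathbf{h}^T(\cdot)\mathbf{\theta}$, apply the Continuous Mapping Theorem to the weak limit from Theorem~\ref{thm:1}, and read off the covariance $\mathbf{h}^T(\mathbf{v})V\mathbf{h}(\mathbf{w})$ from the Gaussianity of $\mathbf{Z}$. Your version is if anything slightly more explicit than the paper's about why the map into $\mathcal{C}(\Delta_d)$ is continuous in the supremum norm, which is a welcome clarification rather than a deviation.
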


\begin{proof}
Recall that 
\begin{equation*}
A( \mathbf{w},\mathbf{\hat{\theta}}_{M})
=A_{p}( \mathbf{w}) 
+ \mathbf{h}^{T}(\mathbf{w})\mathbf{\hat{\theta}}_{M}.
\end{equation*}
As the mapping $\mathbf{\theta}\mapsto 
\mathbf{h}^T\mathbf{\theta}$ 
is linear and continuous in $\mathcal{C}(\Delta_d)$, 
we obtain~\eqref{eq:AN_for_A(.)} 
from the Continuous Mapping Theorem. 
In fact, we have the representation 
$\zeta_A(\mathbf{w})=\mathbf{h}^T(\mathbf{w})\mathbf{Z}$ 
in~\eqref{eq:AN_for_A(.)}. 
The covariance structure of the limit process follows from 
\begin{equation*}
\mathrm{cov}
\left( 
\mathbf{h}^{T}( \mathbf{v}) 
\mathbf{\hat{\theta}}_{M},
\mathbf{h}^{T}( \mathbf{w}) 
\mathbf{\hat{\theta}}_{M}\right) 
=
\mathbf{h}^{T}%
( \mathbf{v}) 
\mathrm{var} (\mathbf{\hat{\theta}}_{M}) 
\mathbf{h}( \mathbf{w}).
\end{equation*}
\end{proof}%

Note that, more specifically, Theorem \ref{thm:1} implies that 
$\sqrt{n}( A(  \mathbf{w},\mathbf{\hat{\theta}}_{M})  
-A(\mathbf{w},\mathbf{\theta}^{0}) )$ 
and 
$\sqrt{n}(
A(\mathbf{w},\mathbf{\hat{\theta}}_{M,OLS})  
-A(\mathbf{w},\mathbf{\theta}^{0}))$ 
are asymptotically equivalent to the stochastic processes 
$\zeta_{A}(\mathbf{w})=\mathbf{h}^{T}(\mathbf{w})\mathbf{Z}$ 
and 
$\zeta_{A,OLS}(\mathbf{w})=\mathbf{h}^{T}(\mathbf{w})\mathbf{Z}_{OLS}$, respectively, 
with index $\mathbf{w}\in\Delta_d$. The random variables $\mathbf{Z}$ and 
$\mathbf{Z}_{OLS}$ are the weak limits in Theorem \ref{thm:1}.

Another consequence of Theorem~\ref{thm:1} is the asymptotic distribution 
of $C(\mathbf{u},\mathbf{\hat{\theta}}_{M})$ and 
$C(\mathbf{u},\mathbf{\hat{\theta}}_{M,OLS})$.

\begin{corollary}
\label{cor:2}Under the assumptions of Theorem~\ref{thm:1} we have, as
$n\rightarrow\infty$,%
\begin{align*}
&  \sqrt{n}\left(  C(\mathbf{u},\mathbf{\hat{\theta}}_{M})
-C(\mathbf{u},\mathbf{\theta}^{0}) \right)  
\overset{\mathrm{w}}{\rightarrow}\zeta_{C},\\
&  \sqrt{n}\left(  C(\mathbf{u},\mathbf{\hat{\theta}}_{M,OLS})
-C(\mathbf{u},\mathbf{\theta}^{0})  \right)  
\overset{\mathrm{w}}{\rightarrow}\zeta_{C,OLS}
\end{align*}
where $\zeta_{C}$, $\zeta_{C,OLS}$ are zero mean Gaussian processes with
covariance functions%
\begin{align*}
\gamma_{C}(  \mathbf{u},\mathbf{v})   &  
=\dot{C}^{T}( \mathbf{u},\mathbf{\theta}^{0}) 
V\dot{C}(\mathbf{v},\mathbf{\theta}^{0}),\\
\gamma_{C,OLS}(  \mathbf{u},\mathbf{v})   &  
=\dot{C}^{T}(\mathbf{u},\mathbf{\theta}^{0})  
V_{OLS}\dot{C}(\mathbf{v},\mathbf{\theta}^{0})  .
\end{align*}
Here,
\[
\dot{C}(\cdot,\mathbf{\theta}^{0})  
=C(  \cdot,\mathbf{\theta}^{0})  
\left( \log\frac{C_{1}(\cdot)}{C_{p}(  \cdot)},
\ldots,\log\frac{C_{p-1}(\cdot)}{C_{p}(\cdot)}\right)^{T}
\]
and $V$, $V_{OLS}$ are as in Theorem~\ref{thm:1}. More specifically,
\[
\zeta_{C}(\cdot)=
\dot{C}^{T}(\cdot,\mathbf{\theta}^{0})\mathbf{Z,}
\text{ }\zeta_{C,OLS}(\cdot)=\dot{C}^{T}(\cdot,\mathbf{\theta}^{0})\mathbf{Z}_{OLS}%
\]
with $\mathbf{Z}$, $\mathbf{Z}_{OLS}$ from Theorem~\ref{thm:1}.
\end{corollary}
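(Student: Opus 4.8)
The plan is to obtain the result as a composition of two continuous mappings applied to the already-established weak limit for $\mathbf{\hat{\theta}}_M$ (and $\mathbf{\hat{\theta}}_{M,OLS}$) in Theorem~\ref{thm:1}, exactly in the spirit of the proof of Corollary~\ref{cor:1}. First I would fix $\mathbf{u}\in(0,1)^d$ and recall from~\eqref{eq:C(.,theta)} that $C(\mathbf{u},\mathbf{\theta})=C_p(\mathbf{u})\prod_{i=1}^{p-1}\bigl(C_i(\mathbf{u})/C_p(\mathbf{u})\bigr)^{\theta_i}$, so that $\mathbf{\theta}\mapsto C(\mathbf{u},\mathbf{\theta})$ is smooth (real-analytic) in $\mathbf{\theta}$ on a neighbourhood of $\mathbf{\theta}^0$. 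The key computation is the gradient: differentiating the product representation with respect to $\theta_i$ gives $\partial C(\mathbf{u},\mathbf{\theta})/\partial\theta_i = C(\mathbf{u},\mathbf{\theta})\,\log\bigl(C_i(\mathbf{u})/C_p(\mathbf{u})\bigr)$, which is precisely the $i$-th component of the vector $\dot C(\mathbf{u},\mathbf{\theta}^0)$ stated in the corollary (evaluated at $\mathbf{\theta}=\mathbf{\theta}^0$).

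Next I would apply the delta method to the convergence $\sqrt{n}(\mathbf{\hat{\theta}}_M-\mathbf{\theta}^0)\overset{\mathrm{w}}{\rightarrow}\mathbf{Z}$ from Theorem~\ref{thm:1}, using the map $g_{\mathbf{u}}(\mathbf{\theta}):=C(\mathbf{u},\mathbf{\theta})$ which is continuously differentiable at $\mathbf{\theta}^0$ (an interior point of $\Theta$ by assumption). This yields, for each fixed $\mathbf{u}$,
\[
\sqrt{n}\bigl(C(\mathbf{u},\mathbf{\hat{\theta}}_M)-C(\mathbf{u},\mathbf{\theta}^0)\bigr)\overset{\mathrm{w}}{\rightarrow}\dot C^T(\mathbf{u},\mathbf{\theta}^0)\,\mathbf{Z}=:\zeta_C(\mathbf{u}),
\]
and since $\mathbf{Z}$ is centred Gaussian with covariance $V$, the limit $\zeta_C$ is a centred Gaussian process with covariance $\gamma_C(\mathbf{u},\mathbf{v})=\dot C^T(\mathbf{u},\mathbf{\theta}^0)\,V\,\dot C(\mathbf{v},\mathbf{\theta}^0)$, as claimed; the argument for $\mathbf{\hat{\theta}}_{M,OLS}$ is identical with $V$ replaced by $V_{OLS}$. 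For the finite-dimensional distributions this is immediate from the Cramér–Wold device applied to $(\mathbf{u}_1,\ldots,\mathbf{u}_k)$, since $(\sqrt{n}(C(\mathbf{u}_j,\mathbf{\hat\theta}_M)-C(\mathbf{u}_j,\mathbf{\theta}^0)))_{j=1}^k$ is a smooth function of $\sqrt{n}(\mathbf{\hat\theta}_M-\mathbf{\theta}^0)$ with Jacobian the matrix whose rows are $\dot C^T(\mathbf{u}_j,\mathbf{\theta}^0)$.

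If one wants the statement as weak convergence in a function space (e.g.\ $\mathcal{C}([0,1]^d)$ or $\mathcal{C}([\delta,1-\delta]^d)$), the cleaner route is to note that $\mathbf{\theta}\mapsto C(\cdot,\mathbf{\theta})$ is a continuous map from $\mathbb{R}^{p-1}$ into $\mathcal{C}(K)$ for any compact $K$, and to linearise: write $C(\mathbf{u},\mathbf{\hat\theta}_M)-C(\mathbf{u},\mathbf{\theta}^0)=\dot C^T(\mathbf{u},\mathbf{\theta}^0)(\mathbf{\hat\theta}_M-\mathbf{\theta}^0)+R_n(\mathbf{u})$ with $\sup_{\mathbf{u}\in K}|R_n(\mathbf{u})|=O(\|\mathbf{\hat\theta}_M-\mathbf{\theta}^0\|^2)=o_P(n^{-1/2})$ by a second-order Taylor expansion whose error is uniform in $\mathbf{u}$ (the second derivatives of $C(\mathbf{u},\cdot)$ are bounded on $K\times \bar B$ for a small closed ball $\bar B\ni\mathbf{\theta}^0$). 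Then the Continuous Mapping Theorem applied to the linear map $\mathbf{z}\mapsto\dot C^T(\cdot,\mathbf{\theta}^0)\mathbf{z}$ from $\mathbb{R}^{p-1}$ into $\mathcal{C}(K)$, together with Slutsky, finishes the argument. The only mild obstacle is the uniform control of the Taylor remainder near the boundary of $[0,1]^d$, where the factors $\log(C_i(\mathbf{u})/C_p(\mathbf{u}))$ may blow up; this is handled either by restricting to a compact subset of the open cube, or by observing that the singularities of the $\log$-ratios are integrable/dominated in the same way the $A_i$ are controlled on $\Delta_d$ in the earlier arguments. Since this is wholly analogous to the continuity arguments already invoked for $\phi$ and for $\mathbf{h}^T(\cdot)$, no new ideas are required.
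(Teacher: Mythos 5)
Your overall route is the same as the paper's: compute the gradient $\partial_{\theta_i}C(\mathbf{u},\mathbf{\theta})=C(\mathbf{u},\mathbf{\theta})\log\bigl(C_i(\mathbf{u})/C_p(\mathbf{u})\bigr)$, apply the delta method to $\sqrt{n}(\mathbf{\hat{\theta}}_M-\mathbf{\theta}^0)\overset{\mathrm{w}}{\rightarrow}\mathbf{Z}$ from Theorem~\ref{thm:1}, and identify the limit $\zeta_C=\dot C^T(\cdot,\mathbf{\theta}^0)\mathbf{Z}$ with covariance $\dot C^T V\dot C$. The pointwise and finite-dimensional parts of your argument are fine. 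The gap is exactly the point you flag at the end and then do not resolve: uniform (sup-norm) control of the linearization over the whole cube $[0,1]^d$. Your two proposed fixes do not close it. Restricting to a compact subset $K\subset(0,1)^d$ proves a weaker statement than the corollary, and ``the singularities of the log-ratios are integrable/dominated'' is not an argument in a supremum-norm setting; integrability of $\log(C_i/C_p)$ is irrelevant to bounding a Taylor remainder uniformly in $\mathbf{u}$.

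The paper's proof consists precisely in removing this obstacle, and the removal is elementary once the derivative is rewritten in spectral coordinates: with $r=\|-\log\mathbf{u}\|_1$ and $\mathbf{w}=r^{-1}(-\log\mathbf{u})$ one has $\log\bigl(C_i(\mathbf{u})/C_p(\mathbf{u})\bigr)=-r\,h_i(\mathbf{w})$ and $C(\mathbf{u},\mathbf{\theta})=\exp\bigl(-rA(\mathbf{w},\mathbf{\theta})\bigr)$, so that $|\partial_{\theta_j}C(\mathbf{u},\mathbf{\theta})|\le 2r e^{-r/d}$ and $|\partial_{\theta_j\theta_k}C(\mathbf{u},\mathbf{\theta})|\le 4r^2 e^{-r/d}$, using $|h_j|\le 2$ (difference of two Pickands functions with values in $[1/d,1]$) and $A(\mathbf{w},\mathbf{\theta})\ge 1/d$. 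Since $r\mapsto r e^{-r/d}$ and $r\mapsto r^2e^{-r/d}$ are bounded on $[0,\infty)$, the apparent blow-up of $\log(C_i/C_p)$ near the boundary is annihilated by the exponential decay of $C$, the first and second $\mathbf{\theta}$-derivatives are uniformly bounded in $\mathbf{u}\in[0,1]^d$, and the second-order Taylor remainder is $O(\|\mathbf{\hat{\theta}}_M-\mathbf{\theta}^0\|^2)$ uniformly on the whole cube; no restriction to compacta and no domination argument are needed. Adding this bound turns your sketch into the paper's proof; without it, the functional claim as stated (uniformly in $\mathbf{u}$, including near the boundary) is not established.
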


\begin{proof}
Recall~\eqref{eq:C(.,theta)} and denote
$r=r(\mathbf{u}):=\|-\log \mathbf{u}\|_1$ and  $\mathbf{w}=\mathbf{w}(\mathbf{u})
:=r^{-1}(-\log \mathbf{u})$, 
where $\log(\mathbf{u})$ is understood componentwise.  
Then we obtain  
\[
C(\mathbf{u},{\theta})
=
\exp\left(-r \left( 
\mathbf{\theta}^T \mathbf{h}(\mathbf{w}) + A_p(\mathbf{w}) 
\right) \right),
\]
and hence, for $j,k=1,\ldots,p-1$,
\begin{eqnarray*}
\partial_{\theta_j} C(\mathbf{u},\mathbf{\theta}) 
&=&
- r h_j(\mathbf{w})
\exp\left(-r \left( 
\mathbf{\theta}^T \mathbf{h}(\mathbf{w}) + A_p(\mathbf{w}) 
\right)\right)\\
\partial_{\theta_j\theta_k}
C(\mathbf{u},\mathbf{\theta}) 
&=&
r^2 h_j(\mathbf{w}) h_k(\mathbf{w}) 
\exp\left(-r \left( 
\mathbf{\theta}^T \mathbf{h}(\mathbf{w}) + A_p(\mathbf{w}) 
\right)\right).
\end{eqnarray*}

A well known consequence of~\eqref{eq:tail_dep_func}
is that each Pickands 
dependence function $A$ assumes values in $[1/d,1]$ only. 
Since $\mathbf{\theta}^T\mathbf{h}(\mathbf{w}) + A_p(\mathbf{w})
= A(\mathbf{w},\mathbf{\theta})$ 
is a proper Pickands dependence function, and each $h_j$ is a difference of 
two, we obtain 
\begin{eqnarray*}
| \partial_{\theta_j} C(\mathbf{u},\mathbf{\theta}) |
&\le&
2r(\mathbf{u}) \exp(-r(\mathbf{u})/d)\\
| \partial_{\theta_j\theta_k} C(\mathbf{u},\mathbf{\theta}) | 
&\le&
4r^2(\mathbf{u}) \exp(-r(\mathbf{u})/d).
\end{eqnarray*}

Thus the first- and second-order derivatives of 
$C(\mathbf{u},\mathbf{\theta})$ with respect to $\mathbf{\theta}$ 
are uniformly bounded in $\mathbf{u}\in[0,1]^d$, 
and the Taylor approximation
\[
C(\mathbf{u},\mathbf{\hat{\theta}}) 
-C(\mathbf{u},\mathbf{\theta}^0)
=
%
%
C(\mathbf{u},\mathbf{\theta}^0)(\hat{\theta}_j- \theta^0_j)
\left( \dot{C}\left(\mathbf{u},\mathbf{\theta}^0\right)\right)^T
(\mathbf{\hat{\theta}} - \mathbf{\theta}^0)
+
O( \left\| \mathbf{\hat{\theta}}- \mathbf{\theta}^0 \right\|^2 )
\]
with $\dot{C}(\mathbf{u},\mathbf{\theta})
:=(\partial_{\theta_1} C(\mathbf{u},\mathbf{\theta}),
\ldots,\partial_{\theta_{p-1}}C(\mathbf{u},\mathbf{\theta}))^T$ 
is uniform in $\mathbf{u}\in[0,1]$. 
The final result now easily follows from Theorem~\ref{thm:1}.   
\end{proof}

\section{Examples and simulations}

\subsection{A parametric model example} \label{sec:simul_example}

To illustrate how one may construct spectral measures 
$\Psi_{1},\ldots,\Psi_{p}$ 
for building parametric models, we consider 
an
example with $d=2$ and $p=3$. 
For ease of notation, we parametrize $\Delta_2$ by the first coordinate, 
so that $t\in[0,1]$ represents $(t,1-t)\in\Delta_2$, and we can write 
$d\Psi(t)$, $A(t)$, etc. 
In particular, if a spectral measure $\Psi$ has a Lebesgue density $f$, then 
the standardization~\eqref{Eq:Spectral_Measure_Condition} reads as 
\begin{equation} \label{Eq:Spectral_Measure_Condition_f}
1= \int_{0}^{1} t f(t) dt = \int_{0}^{1} (1-t) f(t) dt. 
\end{equation}
Let
\begin{align*}
f_{1}(t)  &  =%
\begin{cases}
\frac{a}{2}(1-\cos(3\pi t)) & x\in\lbrack0,2/3]\\
ab(1+\cos(3\pi t/2)) & t\in(2/3,1]
\end{cases}
\\
f_{2}(t)  &  =f_{1}(1-t)\\
f_{3}(t) & =c\sin(\pi t).
\end{align*}
With appropriate constants $a$, $b$, and $c$, the functions $f_i$ satisfy~\eqref{Eq:Spectral_Measure_Condition_f}, and we define the basis elements of 
the parametric family by $d\Psi_i(t) = f_i(t) dt$.  
Figure~\ref{Fig:example_f_i} shows plots of the spectral densities $f_i$, $f_2$, $f_3$. 
The constants $a,b,c$ are derived as follows.
\begin{figure}[htbp]
\centering
\includegraphics[width=0.8\textwidth]{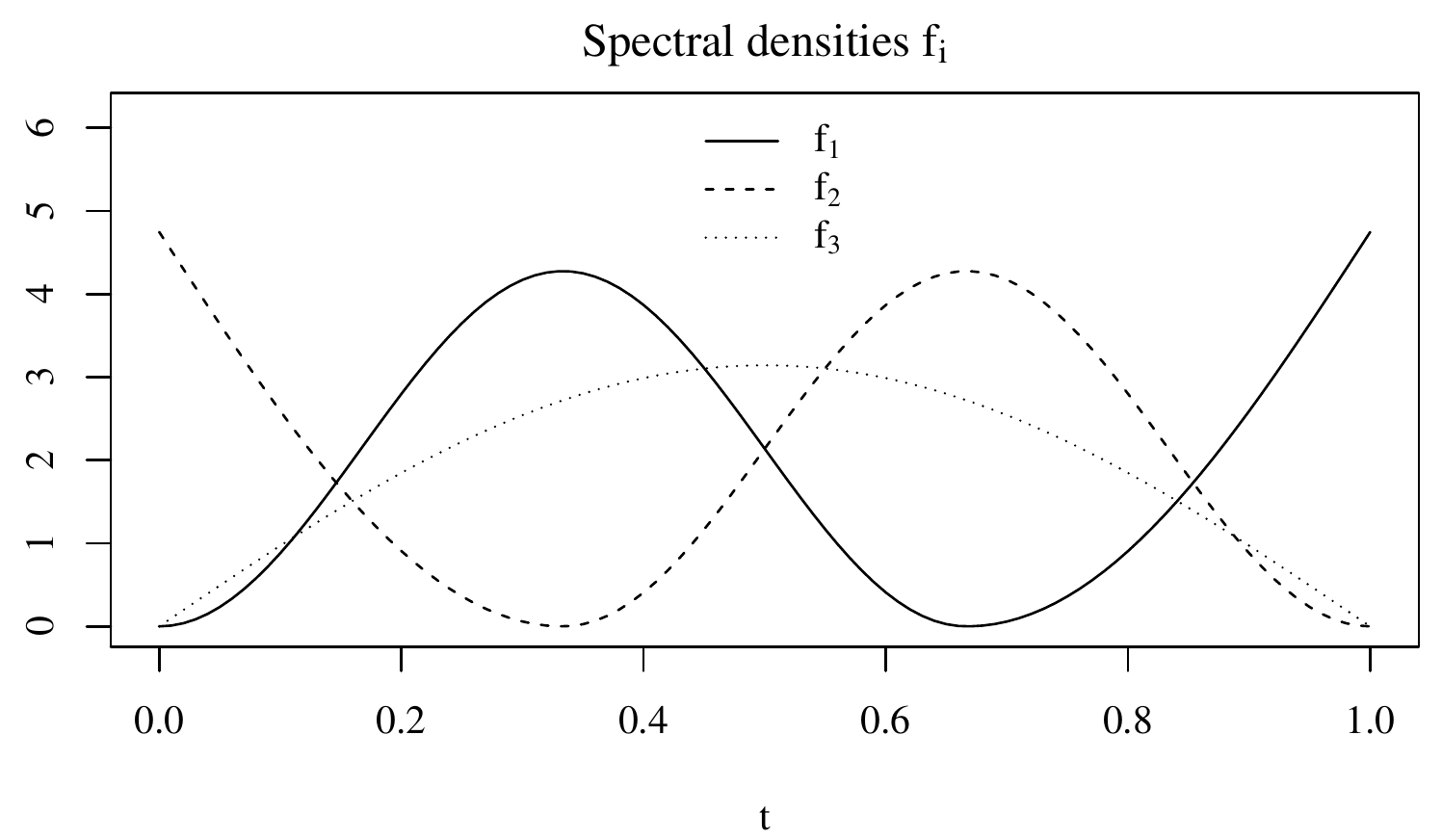}
\caption{Spectral densities $f_1$, $f_2$, and $f_3$.}
\label{Fig:example_f_i}
\end{figure}

Due to symmetry, we have $\int_{0}^{1} tf_{3}(t)dt=\int_{0}^{1}(1-t)f_{3}(t)dt$, 
so that~\eqref{Eq:Spectral_Measure_Condition_f} yields 
\[
2=\int_{0}^{1}f_{3}(t)dt=c\int_{0}^{1}\sin\left(  \pi t\right)  dt=\frac{2c}%
{\pi}.
\]
Thus, $c=\pi$.

To determine $a$ and $b$ in $f_{1}$ and $f_{2}$, it suffices to consider $f_1$. 
Let $g_{1}(z):=\int_{0}^{z} tf_{1}(t)dt$. For $z\in\lbrack0,2/3]$ one has%
\begin{align*}
g_{1}(z)  &  =\frac{a}{2}\left(  \left[  \frac{t^{2}}{2}\right]  _{0}%
^{z}-\left[  \frac{t\sin(3\pi t)}{3\pi}\right]  _{0}^{z}-\left[
\frac{\cos[3\pi t]}{9\pi^{2}}\right]  _{0}^{z}\right) \\
&  =\frac{a}{2}\left(  \frac{z^{2}}{2}-\frac{z\sin(3\pi z)}{3\pi}%
-\frac{\cos(3\pi z)}{9\pi^{2}}+\frac{1}{9\pi^{2}}\right)  ,
\end{align*}
and for $z\in(2/3,1]$,%
\begin{align*}
g_{1}(z)  &  =\frac{a}{9}+ab\left(  \left[  \frac{t^{2}}{2}\right]  _{2/3}%
^{z}+\left[  \frac{t\sin(3\pi t/2)}{3\pi/2}\right]  _{2/3}^{z}+\left[
\frac{\cos(3\pi t/2)}{9\pi^{2}/4}\right]  _{2/3}^{z}\right) \\
&  =\frac{a}{9}+ab\left(  \frac{z^{2}}{2}-\frac{2}{9}+\frac{z\sin(3\pi
z/2)}{3\pi/2}+\frac{\cos(3\pi z/2)}{9\pi^{2}/4}+\frac{1}{9\pi^{2}/4}\right)  .
\end{align*}
In particular,%
\[
g_{1}(1)=\frac{a}{9}+ab\left(  \frac{5}{18}-\frac{2}{3\pi}+\frac{4}{9\pi^{2}%
}\right)  .
\]
Moreover, note that 
$\int_{0}^{z}(1-t)f_{1}(t)dt=h_{1}(z)-g_{1}(z)$ 
with $h_{1}(z):=\int_{0}^{z}f_{1}(t)dt$. 
For $z\in\lbrack0,2/3]$ one obtains
\[
h_{1}(z)=\frac{a}{2}\left(  z-\frac{\sin(3\pi z)}{3\pi}\right)  ,
\]
and for $z\in(2/3,1]$,
\begin{align*}
h_{1}(z)  &  =\frac{a}{3}+ab\left[  t+\frac{\sin(3\pi t/2)}{3\pi/2}\right]
_{2/3}^{z}\\
&  =\frac{a}{3}+ab\left(  z-2/3+\frac{\sin(3\pi z/2)}{3\pi/2}\right)  .
\end{align*}
In particular,
\[
h_{1}(1)=\frac{a(1+b(1-2/\pi))}{3}.
\]
We need $a$ and $b$ such that $f_1$ satisfies~\eqref{Eq:Spectral_Measure_Condition_f}, which is equivalent to $g_{1}(1)=1$ and $h_1(1)=2$. 
The latter equation yields $a=6\left(  1+b(1-2/\pi)\right)
^{-1}$. Substituting this in $g_{1}(1)=1$,
\[
b=\frac{\pi^{2}}{8-6\pi+2\pi^{2}}%
\]
and hence%
\[
a=\frac{12\pi^{2}-36\pi+48}{3\pi^{2}-8\pi+8}.
\]
Note that further spectral densities of this type can be defined, for instance, 
by replacing $2/3$ in the definition of $f_{1}$ by other values (in the
interval $(0,1)$).

\subsection{Sampling technique}

The asymptotic results obtained in section 4 are illustrated by
simulations for the example introduced above. Thus, the copula 
$C(\cdot,\theta)$ is defined by the Pickands dependence function
\[
A(t,\mathbf{\theta})=\theta_{1}A_{1}(t)+\theta_{2}%
A_{2}(t)+(1-\theta_{1}-\theta_{2})A_{3}(t)
\]
with $t\in[0,1]$ representing $(t,1-t)\in\Delta_2$, 
$0<\theta_{1},\theta_{2}<1$, $\theta_{1}+\theta_{2} \le 1$, 
and $A_i$ $i=1,2,3$ being the Pickands dependence functions 
corresponding to $\Psi_{i}$ (and $f_i$).   
A random vector $(X_{1},X_{2})\sim C(\cdot,\theta)$ 
can be simulated exactly using the algorithm proposed in
\cite{Ghoudi_et_al:1998}. More specifically, given a bivariate 
dependence function $A=A(\cdot,\theta)$, the corresponding extreme
value copula $C=C(\cdot,\theta)$ can be sampled as follows:

\begin{enumerate} 
\item Simulate $Z\in\lbrack0,1]$ with distribution function
\[
P(Z\le z) = z+z(1-z)\frac{A^{\prime}(z)}{A(z)} =: G_Z(z).
\]
Note that if $Z$ has a density $g_{Z}$, then $g_{Z}=G_{Z}^{\prime}$. 
\item Calculate
\[
p(Z)=\frac{Z(1-Z)A^{\prime\prime}(Z)}{A(Z)G_Z^{\prime}(Z)}.
\]
Let $V=U_{1}$ with probability $p(Z)$ and $V=U_{1}U_{2}$
with probability $1-p(Z)$, where $U_{1},U_{2}$ are independent and 
uniformly distributed on $[0,1]$.

\item Set $X_{1}=V^{Z/A(Z)}$ and $X_{2}=V^{(1-Z)/A(Z)}$. Then the distribution 
function of the random vector $(X_{1},X_{2})$ is equal to $C$.
\end{enumerate}

The computation of $G_{Z}$, $G_{Z}^{\prime}$, and $p(Z)$ can be simplified as
follows. Recall that, given a spectral density 
$f=\theta_{1}f_{1}+\theta_{2}f_{2}+(1-\theta_{1}-\theta_{2})f_{3}$, 
$A(z)=\int_{0}^{1}\max(tz,(1-t)(1-z))f(t)dt$. 
Since $tz>(1-t)(1-z)$ is equivalent to $t>1-z$, we
obtain
\begin{align}
A(z)  
&  = \nonumber
\int_{0}^{1-z}(1-z)(1-t)f(t)dt+\int_{1-z}^{1}ztf(t)dt\\
&  = \nonumber
(1-z)(h(1-z)-g(1-z))+z(1-g(1-z))\\
&  = \label{eq:A_via_g_and_h}
z-g(1-z)+(1-z)h(1-z),
\end{align}
where $g(z)=\int_{0}^{z}tf(t)dt$ and $h(z)=\int_{0}^{z}f(t)dt$ 
(note that \eqref{Eq:Spectral_Measure_Condition_f} implies $g(1)=1$ and 
$h(1)=2$). 
From~\eqref{eq:A_via_g_and_h} we obtain that 
\begin{equation*}
A^{\prime}(z)=1+g^{\prime}(1-z)-(1-z)h^{\prime}(1-z)-h(1-z).
\end{equation*}
Consequently, $g^{\prime}(z)=zf(z)$ and $h^{\prime}(z)=f(z)$ imply that
\begin{align*}
A^{\prime}(z)  &  =1+(1-z)f(1-z)-(1-z)f(1-z)-h(1-z)\\
&  =1-h(1-z),\\
A^{\prime\prime}(z)  &  =h^{\prime}(1-z)=f(1-z).
\end{align*}
Thus we obtain%
\[
G_{Z}(z)=\frac{zA(z)-(1-z)zA^{\prime}(z)}{A(z)}=\frac{z(1-g(1-z))}{A(z)},
\]
and therefore
\[
G_{Z}^{\prime}(z)=\frac{1-g(1-z)+z(1-z)f(1-z)}{A(z)}%
-\frac{z(1-g(1-z))(1-h(1-z))}{A^{2}(z)}.
\]

It is obvious that all functions $f,A,g,h$ corresponding to $C(\cdot,\theta)$
are convex combinations of the corresponding $f_{i},A_{i},g_{i},h_{i}$ with
weights $\theta_{1}$,$\theta_{2}$, and $1-\theta_{1}-\theta_{2}$. 
Since $f(z)$ ($z\in\lbrack0,1]$) is a convex combination
of the bounded functions $f_{i}$ ($i=1,2,3$), we can simulate $G_{Z}$ by
rejection sampling. Consequently, for the simulation of $(Y_{1},Y_{2})\sim
C(\cdot,\theta)$ we only need $\theta$ and the functions $f_{i},A_{i}%
,g_{i},h_{i}$ for $i=1,2,3$. The representations of $f_1$, $g_{1}$, 
and $h_{1}$ are already derived above, in subsection~\ref{sec:simul_example}. 
An explicit representation for $A_1$ follows from~\eqref{eq:A_via_g_and_h}.
Due to $f_{2}(z)=f_{1}(1-z)$ one obtains $A_{2}%
(z)=A_{1}(1-z)$. Furthermore,
\begin{align*}
h_{2}(z)  &  =\int_{0}^{z}f_{1}(1-t)dt=\int_{1-z}^{1}f_{1}(y)dy=h_{1}%
(1)-h_{1}(1-z)\\
&  =2-h_{1}(1-z),\\
g_{2}(z)  &  =\int_{1-z}^{1}(1-y)f_{1}(y)dy=h_{1}(1)-h_{1}(1-z)-(g_{1}%
(1)-g_{1}(1-z))\\
&  =1-h_{1}(1-z)+g_{1}(1-z).
\end{align*}
Finally, for $i=3$ we have
\begin{align*}
h_{3}(z)  &  =\int_{0}^{z}f_{3}(x)dx=1-\cos(\pi z)\\
g_{3}(z)  &  =\int_{0}^{z}xf_{3}(x)dx=\frac{1}{\pi}\sin(\pi z)-z\cos(\pi z).
\end{align*}

Figures \ref{fig:example_points}a) through d) show typical samples 
$\mathbf{X}_{i}=(X_{i1},X_{i2})\sim C(\cdot,\theta)$ ($i=1,2,...,n$) with $n=1000$, and
$\mathbf{\theta}=(1,0)$, $(0,1)$, $(0,0)$ and $(\frac{1}{2},\frac{1}{2})$
respectively. Image (and contour) plots of two-dimensional kernel density
estimates for these samples are shown in figures \ref{fig:example}a) through 2d).

\subsection{Simulation results}

To study the finite sample performance of the estimators of $A$ discussed above, 
the following simulation study was carried out. 
For $\mathbf{\theta}=(0.1,0.1)$, $(0.05,0.9)$ 
and $(0.8,0.1)$ respectively, $1000$ simulated
samples of size $n=25\cdot2^{j}$ ($j=0,1,...,8$) were generated. For each
sample, the nonparametric estimates $\widehat{A}$ and $\widehat{A}_{OLS}$ as
well as the corresponding parametric estimates 
$A(\mathbf{w},\mathbf{\hat{\theta}}_{M})$ 
and $A(\mathbf{w},\mathbf{\hat{\theta}}_{M,OLS})$ were calculated.
For $M$ (in $A(\mathbf{w},\mathbf{\hat{\theta}}_{M})$
and $A(\mathbf{w},\mathbf{\hat{\theta}}_{M,OLS})$), 
we used a discrete uniform distribution on the grid 
$\mathbf{w}_{i}=(w_{i1},1-w_{i2})$ ($w_{i1}=0.05\cdot i$, $i=1,2,...,19$). 
As expected, the
naive nonparametric estimator $\hat{A}$ turned out to be clearly inferior to
all other methods. For instance, for $\mathbf{\theta}=(0.1,0.1)$ and $n=50$,
the integrated mean squared error 
$IMSE_{\text{nonp}}=\int_{\Delta_{2}}
E[(\widehat{A}(\mathbf{w})-A(\mathbf{w}))^2]d\mathbf{w}$ 
is $74$ times larger than 
$IMSE_{\text{nonp},OLS}
=\int_{\Delta_{2}}
E[(\hat{A}_{OLS}(\mathbf{w})-A(\mathbf{w}))^2] d\mathbf{w}$, 
and $IMSE_{\text{par}}=\int_{\Delta_{2}}
E[(A(\mathbf{w},\mathbf{\hat{\theta}}_{M}) 
-A(\mathbf{w}))^2] d\mathbf{w}$ 
is almost $7$ times larger than the corresponding quantity
(denoted by $IMSE_{\text{par},OLS}$) for 
$A(\mathbf{w},\mathbf{\hat{\theta}}_{M,OLS})$.  
For larger sample sizes the ratios 
$$r_{\text{nonp}}=IMSE_{\text{nonp}}/IMSE_{\text{nonp},OLS}$$ 
and 
$$r_{\text{par}}=IMSE_{\text{par}}/IMSE_{\text{par},OLS}$$ 
stabilize around the values of
$64$ and $30$ respectively. Moreover, even if we compare the nonparametric
OLS, $\hat{A}_{OLS}$, with the parametric estimator 
$A(\mathbf{w},\mathbf{\hat{\theta}}_{M})$, 
we obtain a ratio of 
$IMSE_{\text{par}}/IMSE_{\text{nonp},OLS}\approx26$ for large sample sizes. 
We may thus conclude that using a good initial nonparametric estimator 
for the parametric method is essential. Detailed results on
$\widehat{A}(\mathbf{w})$ and 
$A(\mathbf{w},\mathbf{\hat{\theta}}_{M})$ 
are therefore omitted, and we focus solely on the comparison
between $\hat{A}_{OLS}$ and 
$A(\mathbf{w},\mathbf{\hat{\theta}}_{M,OLS})$. 
Figures \ref{fig:IMSE_ratio}a), b) and c) show the ratio 
$r=IMSE_{\text{par},OLS}/IMSE_{\text{nonp},OLS}$ 
for the three choices of $\mathbf{\theta}$ as a
function of $n$. In all three cases, $r$ stabilizes around a value below $1$.
The numerical values are given in table 1. As a function of $\mathbf{w}$, 
the relative precision of
$\widehat{A}(\mathbf{w})$ 
compared to $A(\mathbf{w},\mathbf{\hat{\theta}}_{M,OLS})$ 
depends on $\mathbf{w}$ and the shape of $A$. This can be
seen in figures \ref{fig:MSE_ratio}a), b) and c), where simulated values of
\[
r\left(  \mathbf{w}\right)  
=\frac{E\left[  
\left( A( \mathbf{w},\mathbf{\hat{\theta}}_{M,OLS})  
-A(\mathbf{w})\right)^{2}\right]}
{E\left[  \left(  \hat{A}_{OLS}(\mathbf{w})-A(\mathbf{w})\right)^{2}\right] }
\]
are ploted as a function of $w_1$, for different values of $n$. Figures
\ref{fig:A_50_estimates_theta0101}, 
\ref{fig:A_50_estimates_theta00509} and
\ref{fig:A_50_estimates_theta0801},
with estimates of $A$ for $50$ series of length $n=25$ (Fig. a,b) and
$n=200$ (Fig. c,d) respectively, illustrate a further problem with the
nonparametric OLS. For small sample sizes, $\hat{A}_{OLS}$ is often not
exactly convex, which means that it is, with relatively high probability, not a
proper dependence function. By definition, this problem does not occur for
$A(\mathbf{w},\mathbf{\hat{\theta}}_{M,OLS})$. Finally,
boxplots of $\hat{\theta}_{1}$ and $\hat{\theta}_{2}$ for the case with 
$\mathbf{\theta}=(0.05,0.9)$ are given in figures \ref{fig:box}a) and b) 
respectively. One can see in
particular that for small sample sizes the distributions of $\hat{\theta}_{1}$
and $\hat{\theta}_{2}$ are skewed to the right and left respectively. This is
due to $\theta_{1}$ and $\theta_{2}$ being close to the border of the
parameter space, and the restrictions $\theta_{1},\theta_{2}\geq0$ and
$\theta_{1}+\theta_{2} \leq 1$.  

\section{Final remarks}

In this paper we considered estimation of extreme value copulas 
based on parametric models that are defined in terms of the 
spectral measure. This approach is very flexible, and in principle any
type of dependence between extremes can be captured.
The method is not restricted to the
case where the marginal distributions are known, since  
any nonparametric estimator $\hat{A}$
can be used in the projection. Theorem 1 and Corollary 1 apply 
(with $\sigma\left( \mathbf{v},\mathbf{w}\right)$ replaced 
by the corresponding asymptotic covariance function) whenever a functional 
limit theorem of the form given in~\eqref{FCLT_for_the_raw_estimator}
holds for $\hat{A}$. 

An important issue that would need to be addressed in future research 
is the extension to a larger class of copulas. 
In this paper, 
observations were assumed to be generated by an extreme value copula.
In practice, an extreme value copula is usually reached only asymptotically 
(for multivariate maxima). In analogy to nonparametric extreme value copula 
estimators, consistent parametric methods will have to be developed for 
such situations. A further question is model choice, 
i.e. the question how to decide on the number and type 
of spectral measures to be used as a basis. 
For data generated by an extreme value
copula, standard methods such as AIC or BIC 
(\cite{Akaike:1973},\cite{Schwarz:1978}) may be useful. 
In the more general situation where an extreme value copula is only reached in the 
limit, the question is more complex.

\section{Acknowledgements}
This research has been supported in part by the DFG-Research Grant BE 2123/11-1. 
Georg Mainik would like to thank RiskLab, ETH Zurich, for financial support. 



\vfill\eject

\begin{table}[htbp]
	\centering
   \begin{tabular}[c]{|c|c|c|c|}
\hline
$n$ & $\mathbf{\theta}=(0.1,0.1)$ & $\mathbf{\theta}=(0.05,0.9)$ &
$\mathbf{\theta}=(0.8,0.1)$\\
\hline
$25$ &  0.375  &  0.504 & 0.480\\
$50$ & 0.444 &  0.622 & 0.652\\
$100$ & 0.512 &  0.636 & 0.696\\
$200$ & 0.613 &  0.652 & 0.767\\
$400$ & 0.756 &  0.674 & 0.815\\
$800$ & 0.840 &  0.749 & 0.876\\
$1600$ & 0.899 &  0.805 & 0.898\\
$3200$ & 0.903 &  0.868 & 0.900\\
$6400$ & 0.902 &  0.887 & 0.903\\
\hline
\end{tabular}
\caption{$r=IMSE_{{\rm par},OLS}/IMSE_{{\rm nonp},OLS}$ for 
$\mathbf{\theta}=(0.1,0.1)$, $(0.05,0.9)$ and $\mathbf{\theta}=(0.8,0.1)$ 
respectively, and sample sizes $n=25\cdot 2^j$ $(j=0,1,...,8)$.}
\end{table}

\vfill\eject

\begin{figure}[htbp]
	\centering	
\includegraphics[height=0.35\textheight,width=0.4\textwidth]{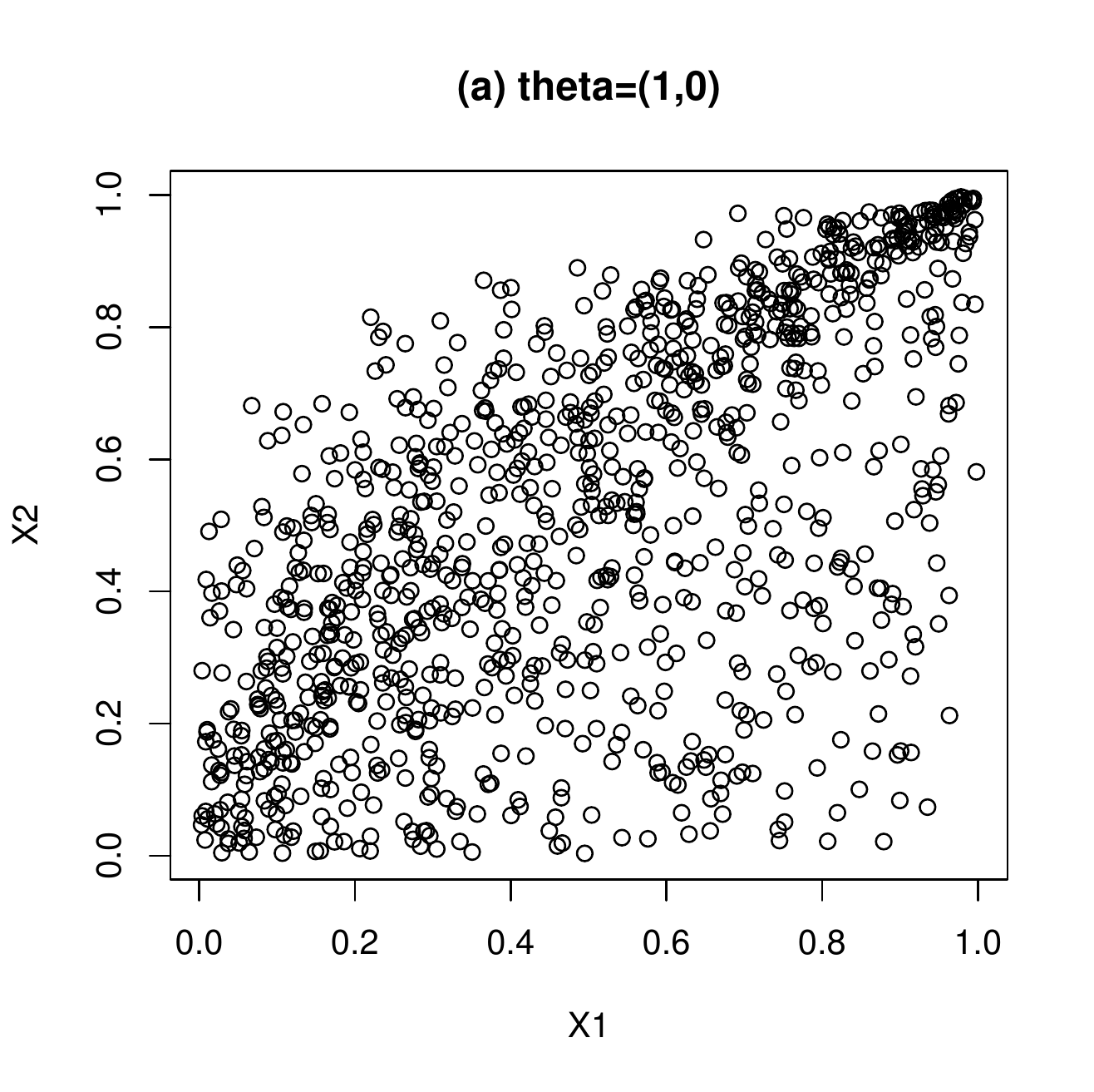}
\includegraphics[height=0.35\textheight,width=0.4\textwidth]{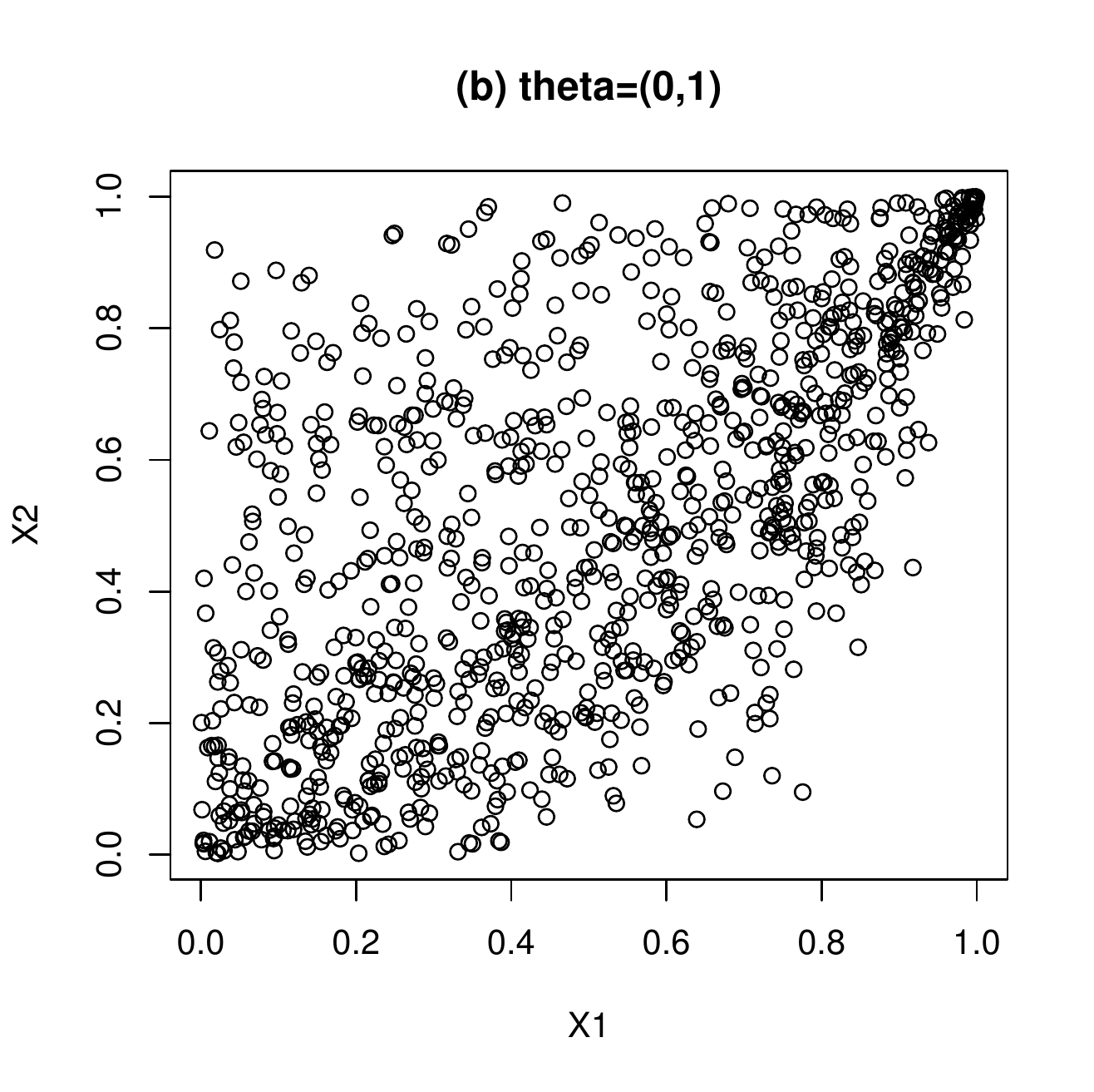}
\includegraphics[height=0.35\textheight,width=0.4\textwidth]{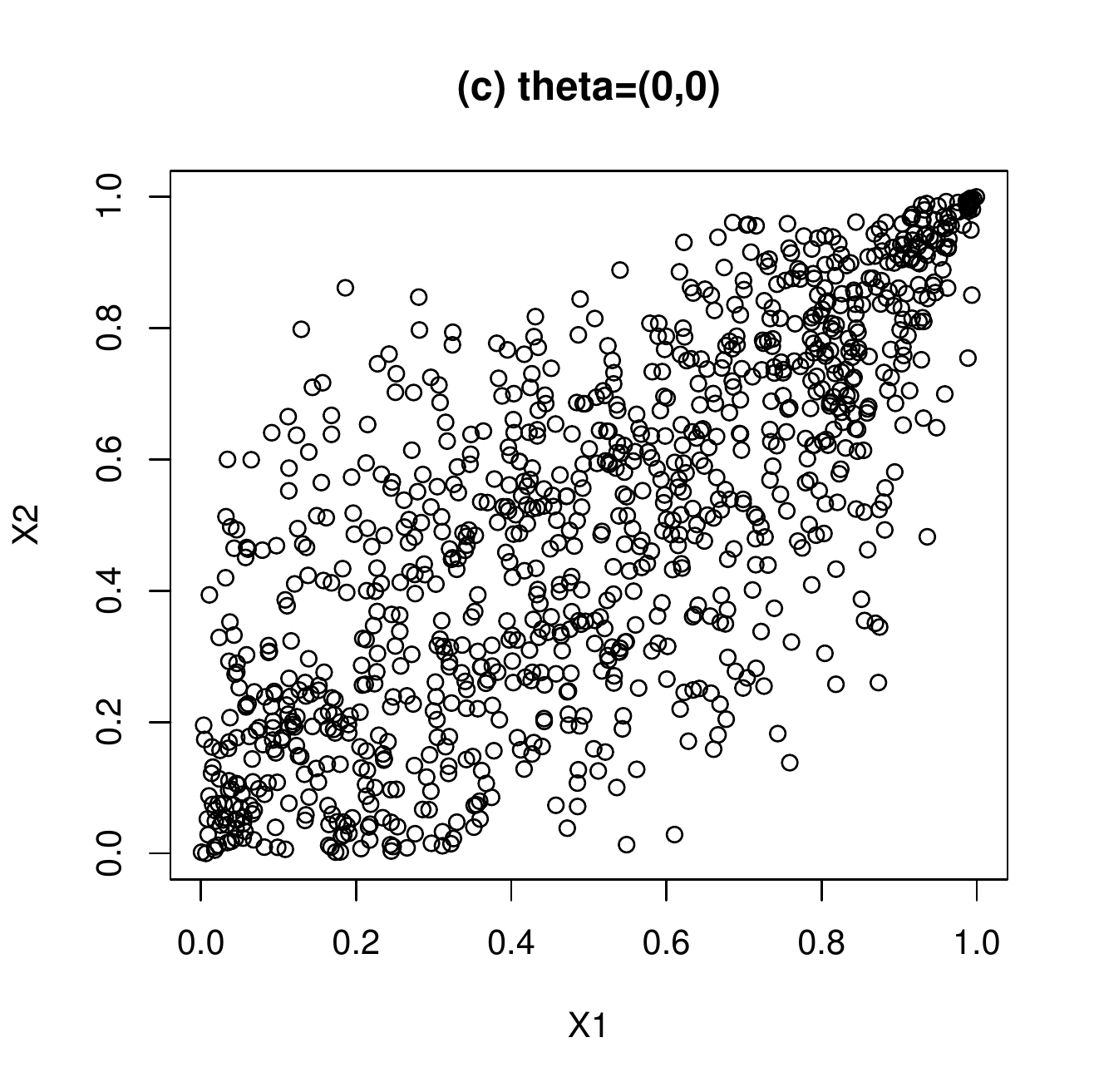}
\includegraphics[height=0.35\textheight,width=0.4\textwidth]{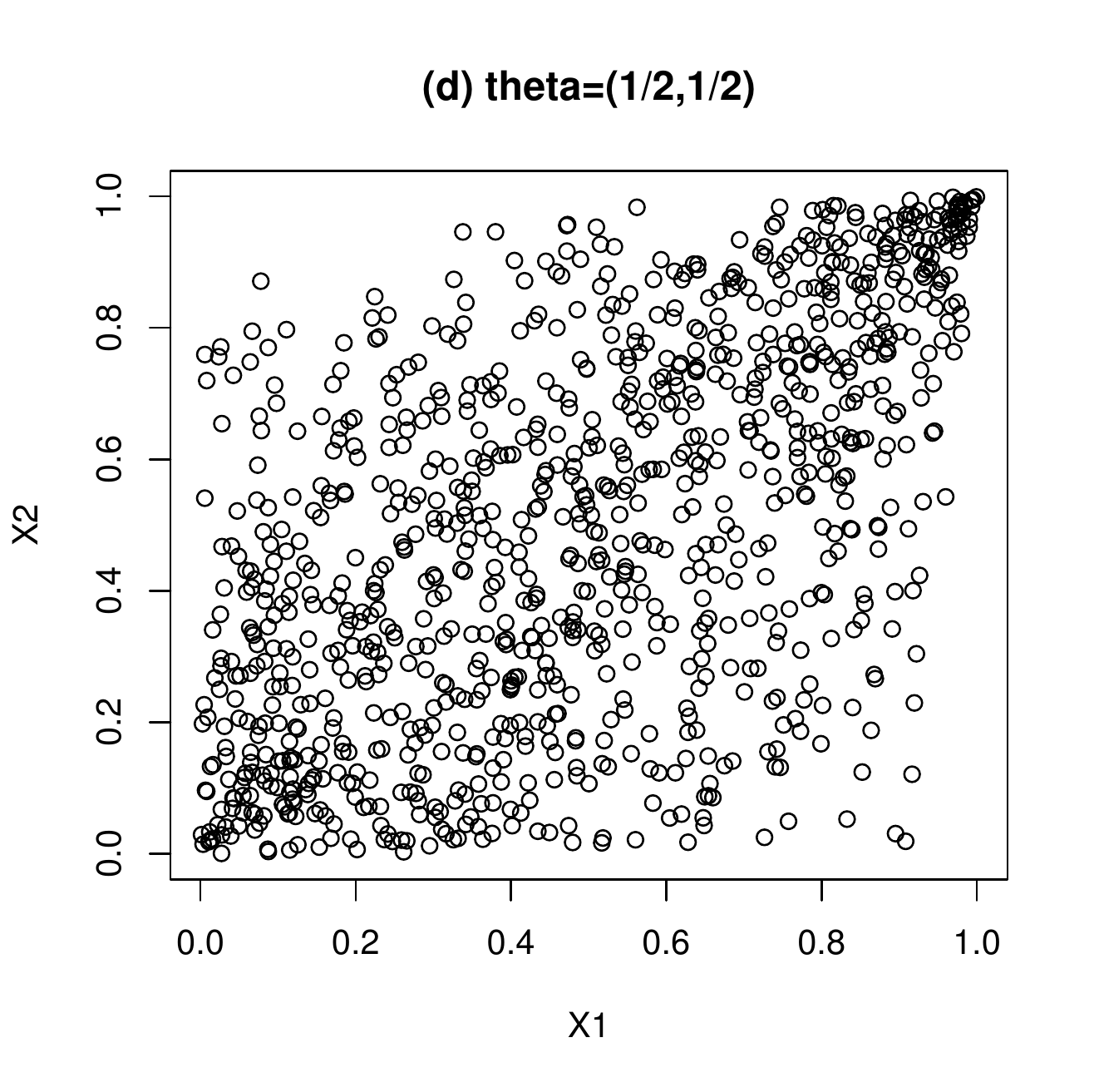}
 	\caption{Simulated samples 
 $\mathbf{X}_{i}=(X_{i1},X_{i2})\sim C(\cdot,\theta)$ ($i=1,2,...,n$) with $n=1000$, and
 $\mathbf{\theta}=(1,0)$, $(0,1)$, $(0,0)$ and $(\frac{1}{2},\frac{1}{2})$
 respectively.}
	\label{fig:example_points}
\end{figure}

\begin{figure}[htbp]
 	\centering	
\includegraphics[height=0.35\textheight,width=0.4\textwidth]{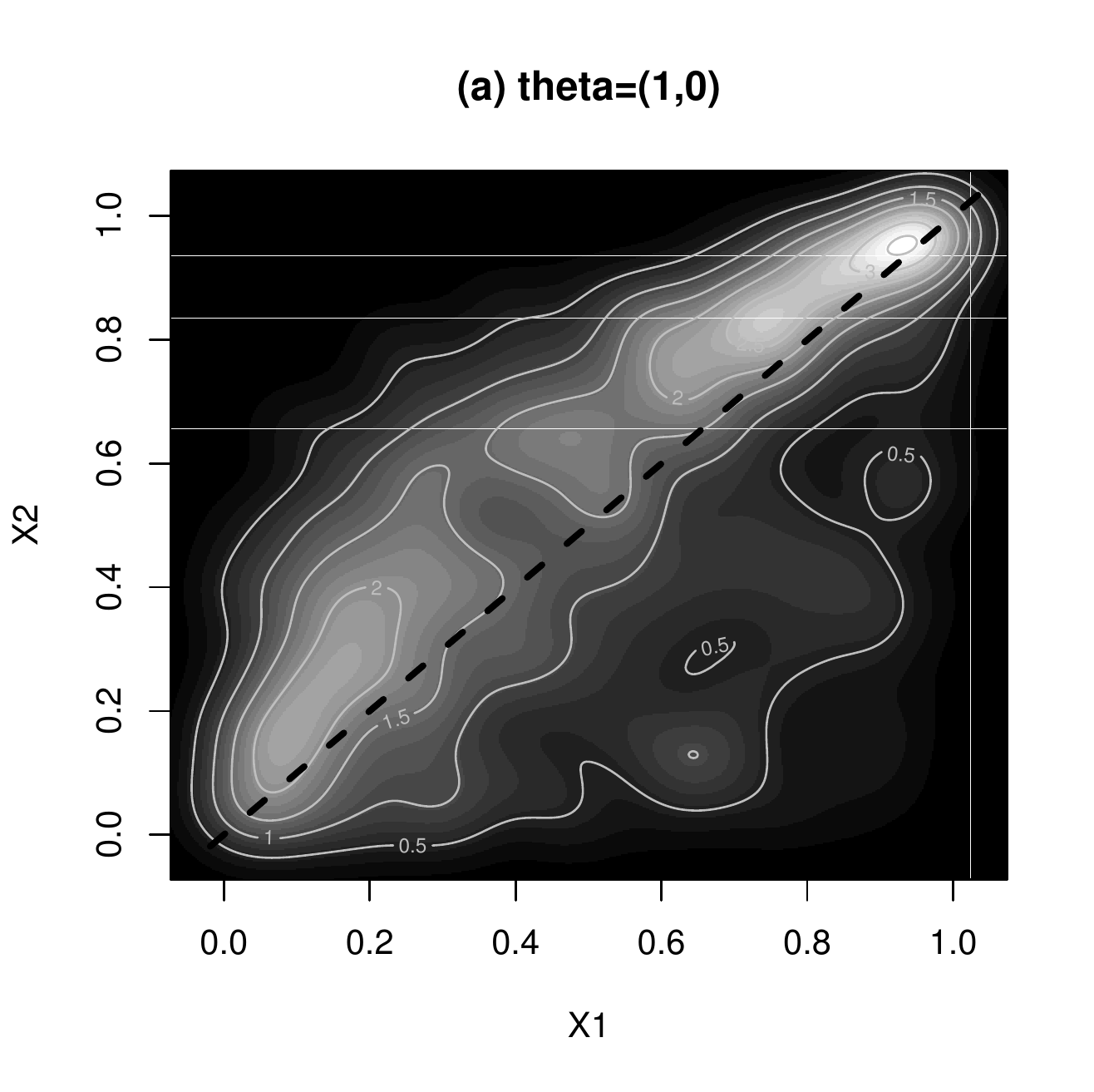}
\includegraphics[height=0.35\textheight,width=0.4\textwidth]{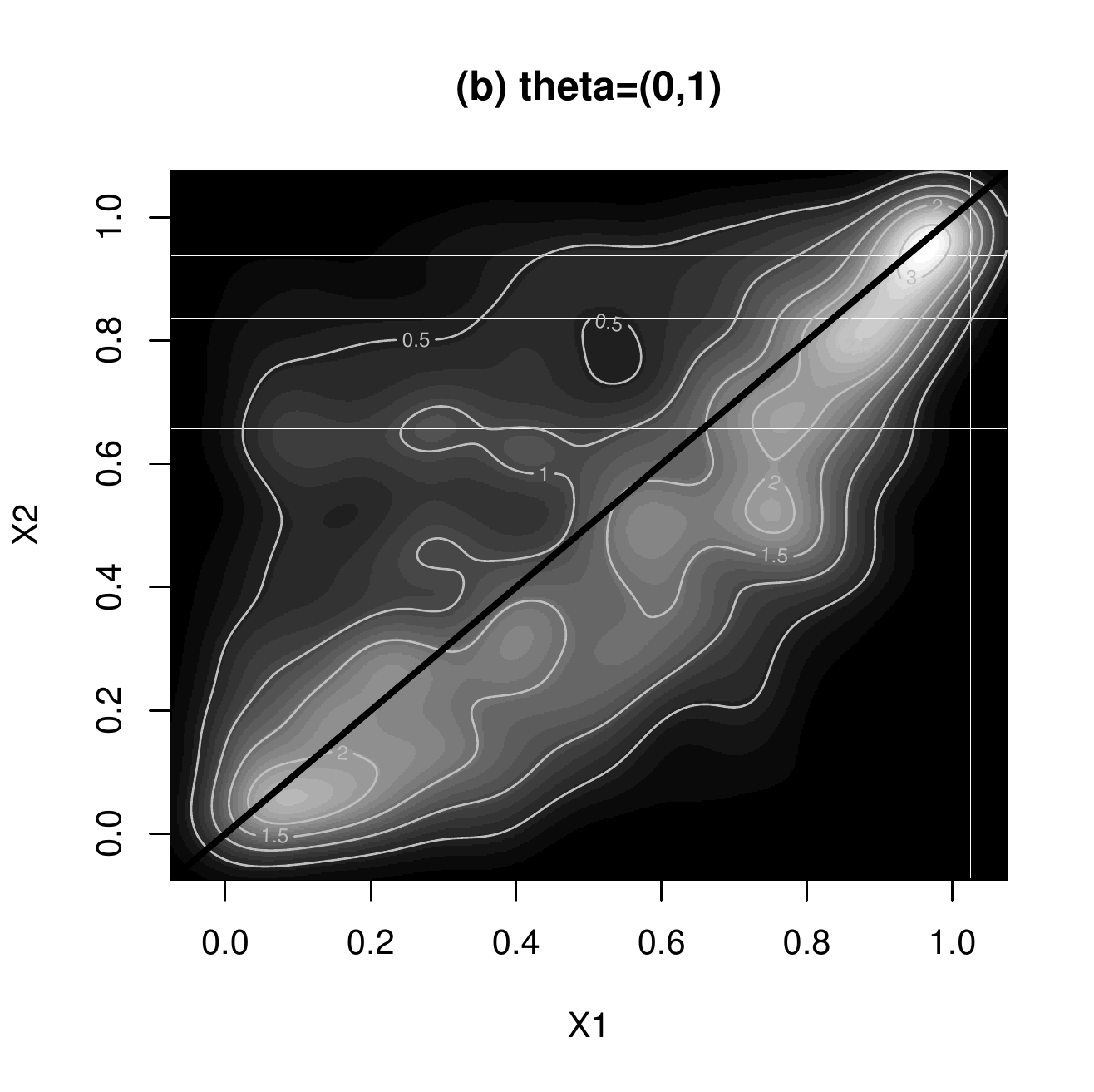}
\includegraphics[height=0.35\textheight,width=0.4\textwidth]{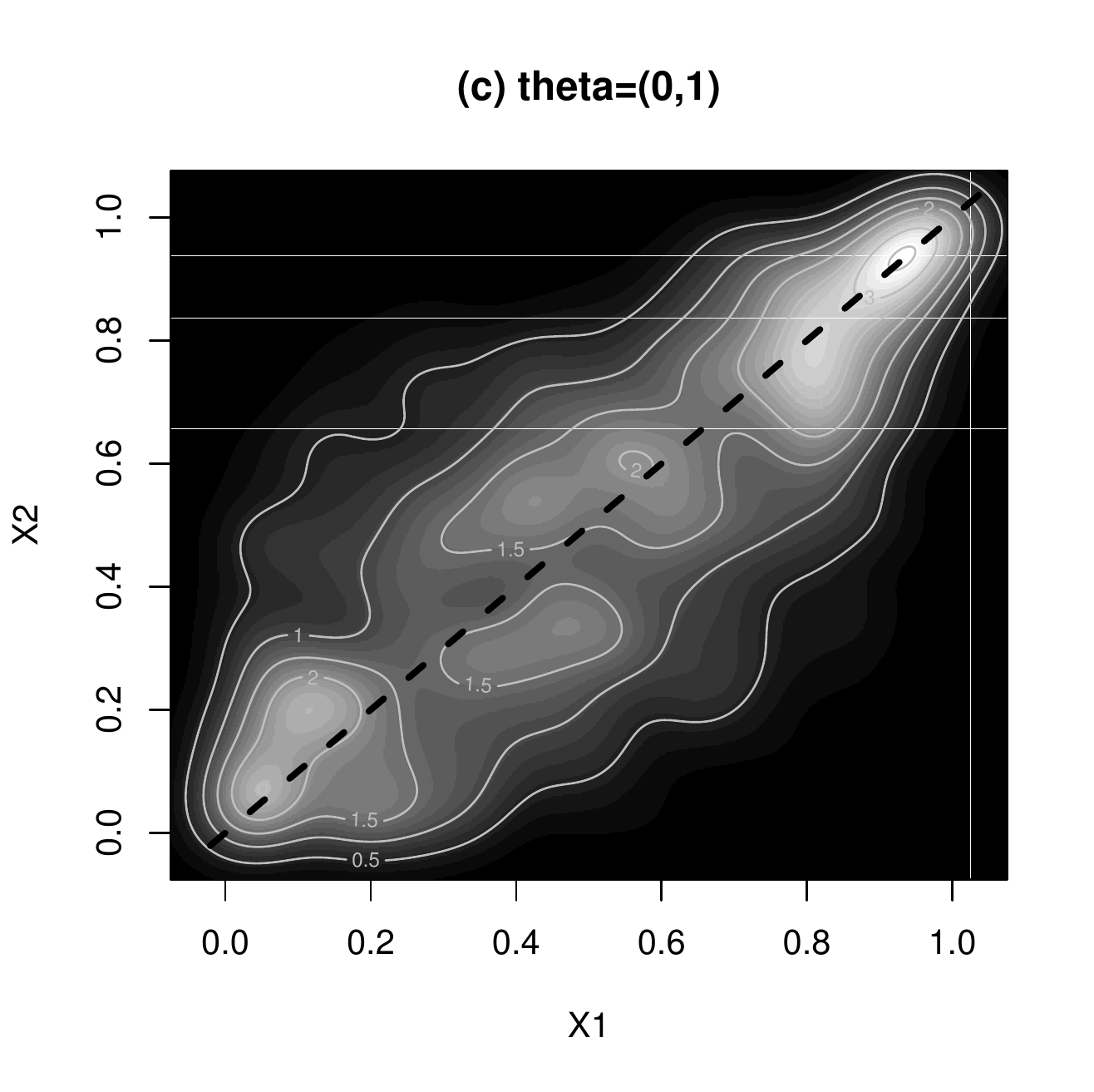}
\includegraphics[height=0.35\textheight,width=0.4\textwidth]{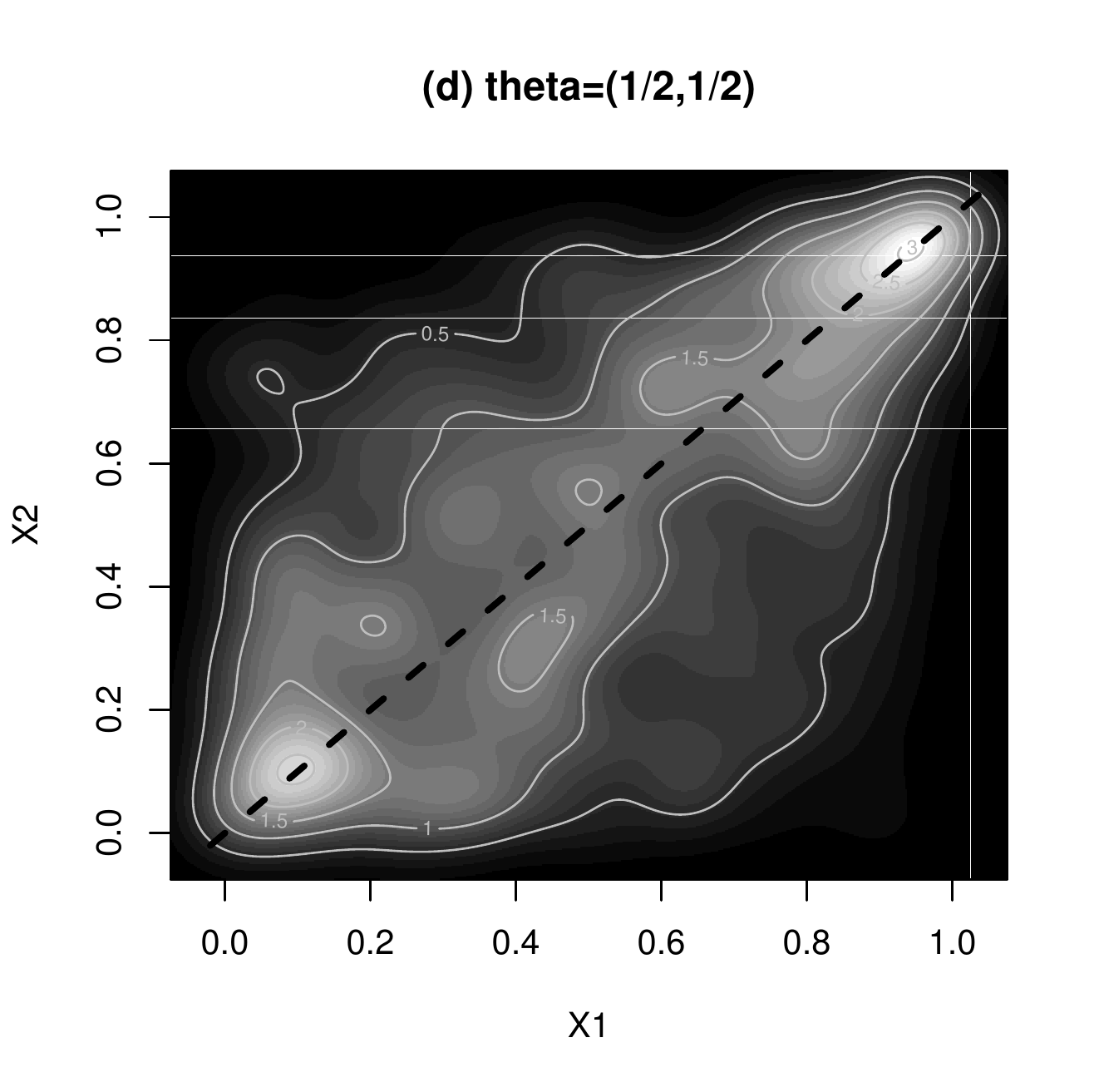}
	\caption{Image and contour plots of nonparametric density estimates 
for the simulated samples in figures \ref{fig:example_points}a) through d).}
	\label{fig:example}
\end{figure}

\begin{figure}[htbp]
	\centering	
\includegraphics[height=0.35\textheight,width=0.4\textwidth]{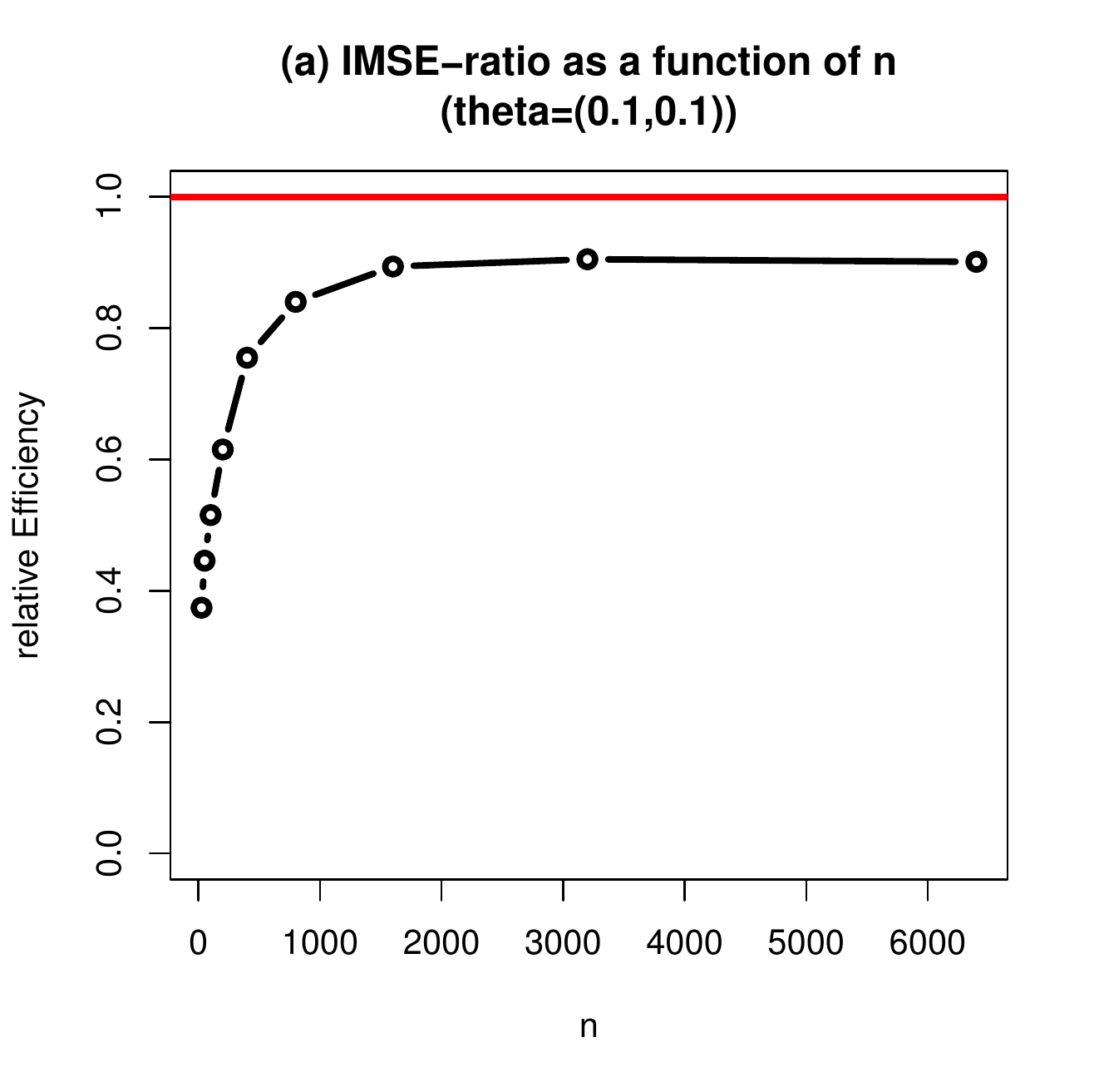}
\includegraphics[height=0.35\textheight,width=0.4\textwidth]{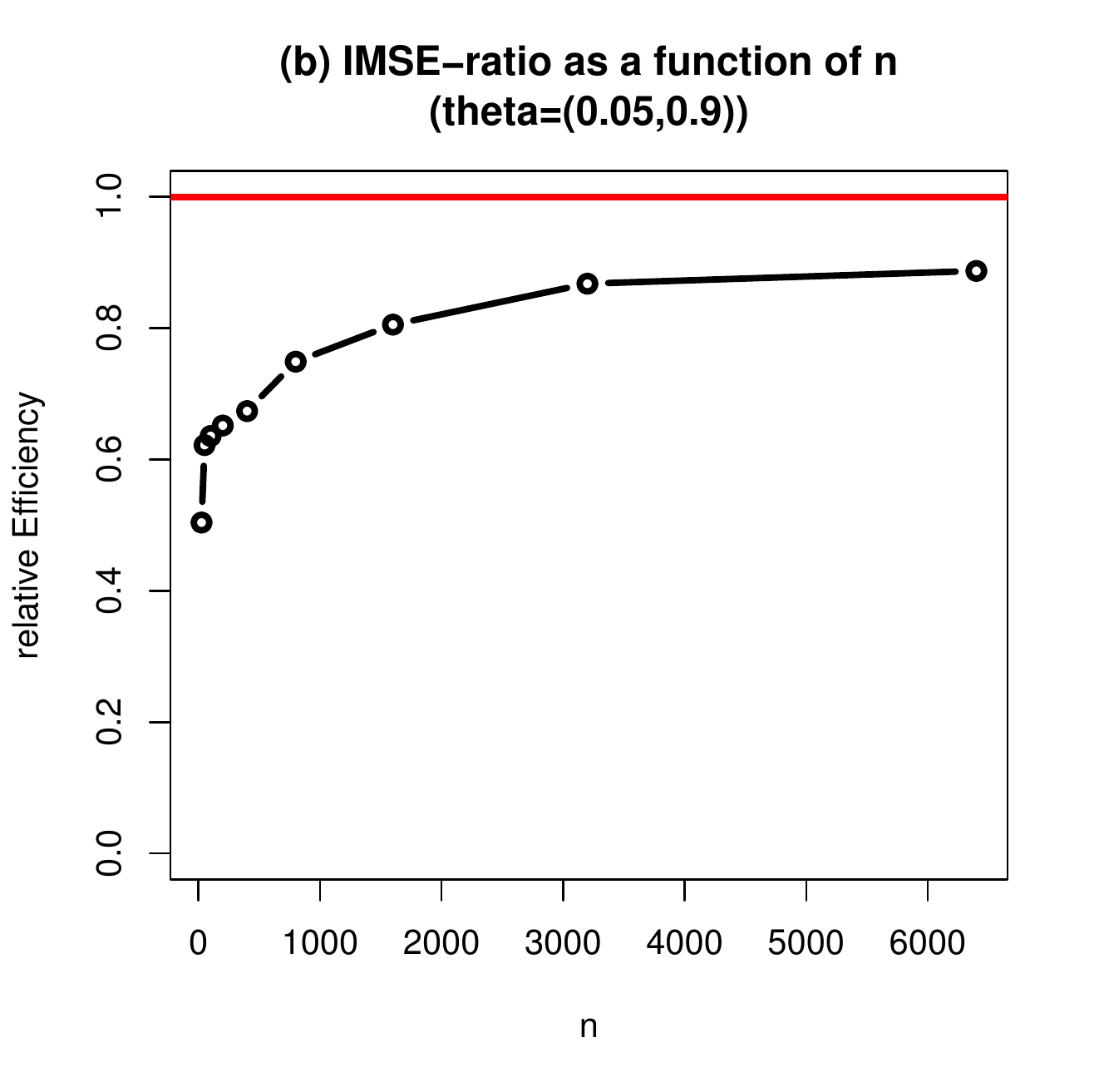}
\includegraphics[height=0.35\textheight,width=0.4\textwidth]{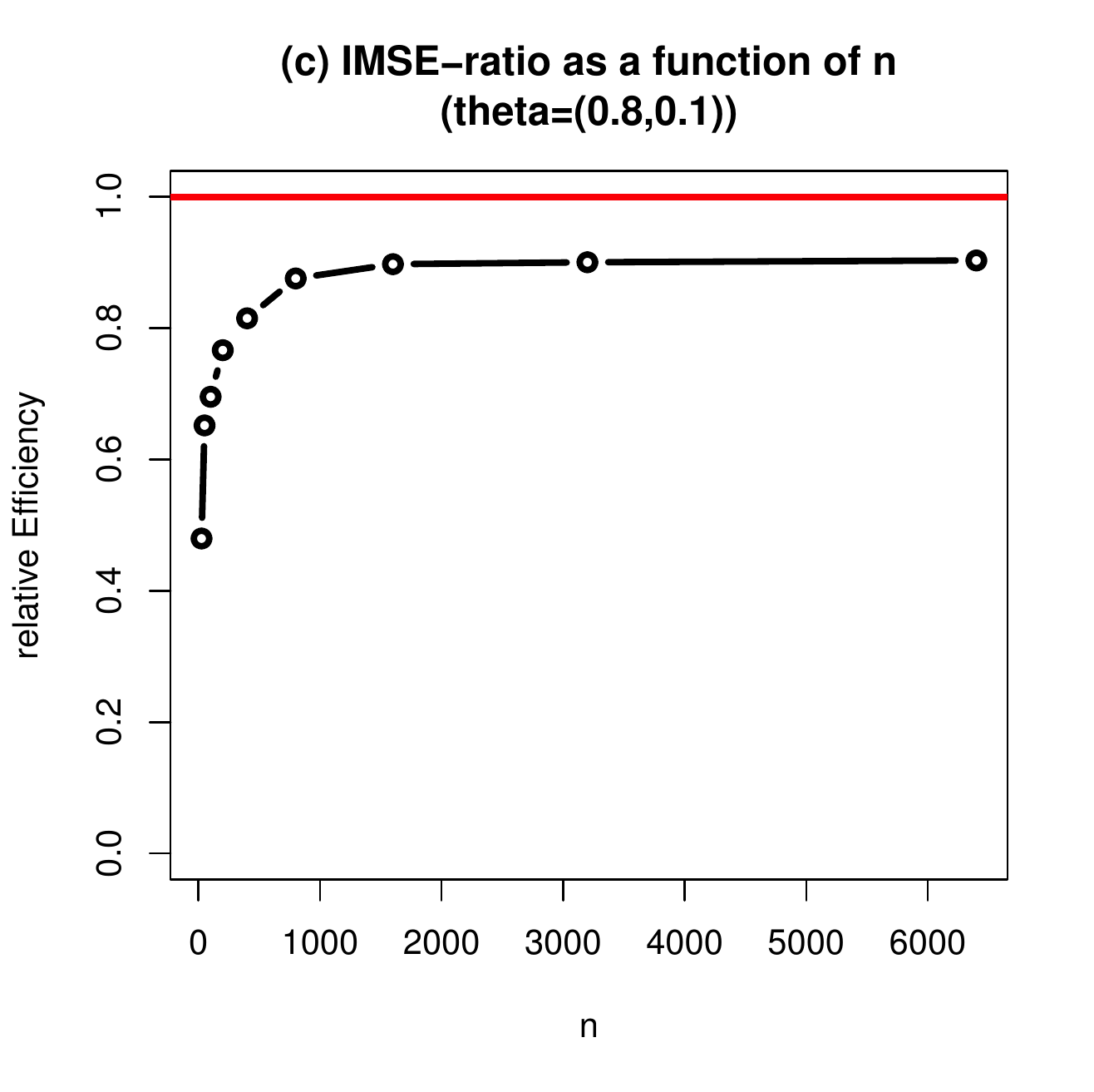}
	\caption{Ratio of parametric and nonparametric IMSE  
	 for (a) $\mathbf{\theta}=(0.1,0.1)$, 
	 (b) $(0.05,0.9)$ and 
	 (c) $(0.8,0.1)$ respectively.}
	\label{fig:IMSE_ratio}
\end{figure}

\begin{figure}[htbp]
	\centering	
\includegraphics[height=0.35\textheight,width=0.4\textwidth]{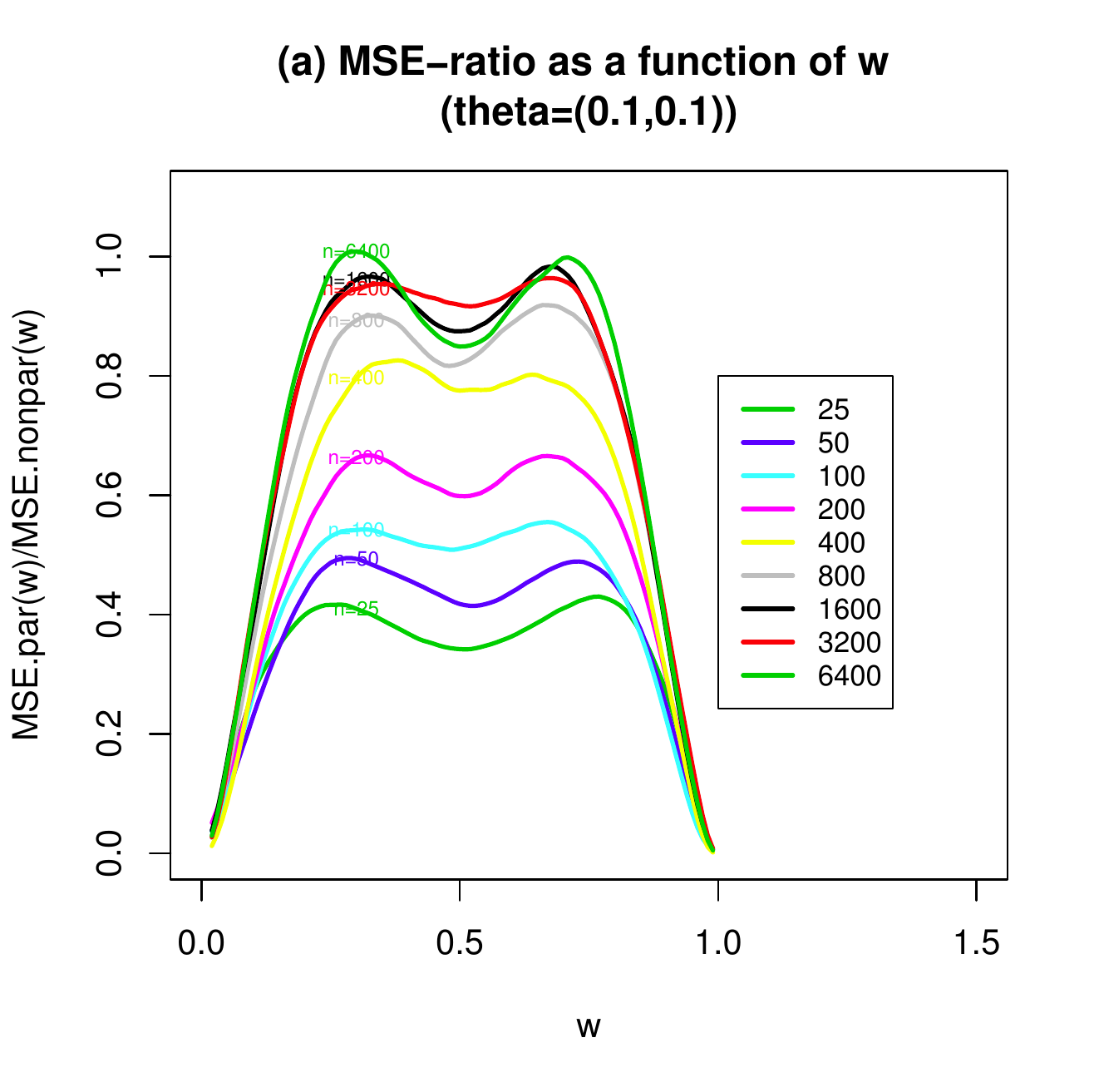}
\includegraphics[height=0.35\textheight,width=0.4\textwidth]{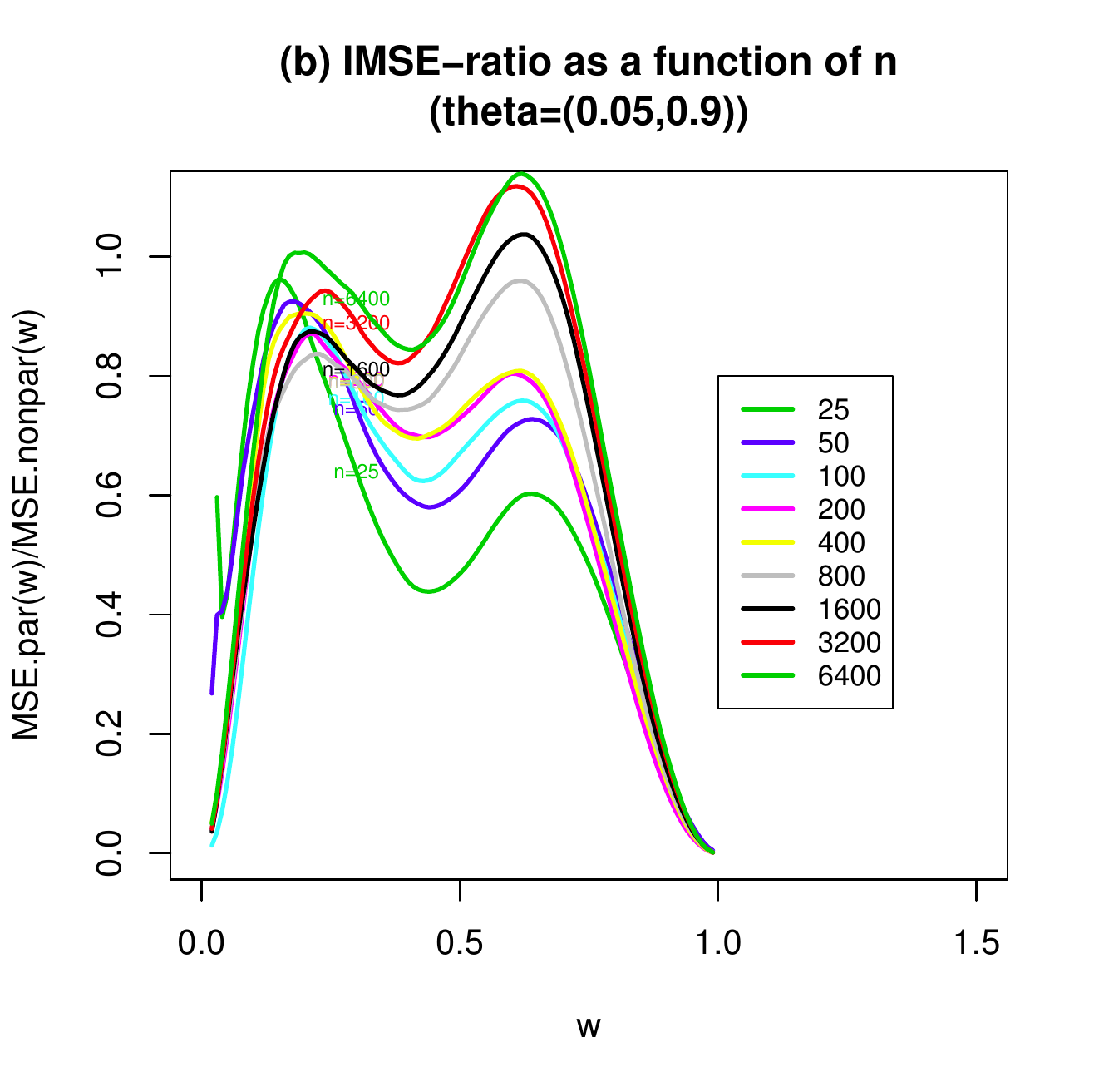}
\includegraphics[height=0.35\textheight,width=0.4\textwidth]{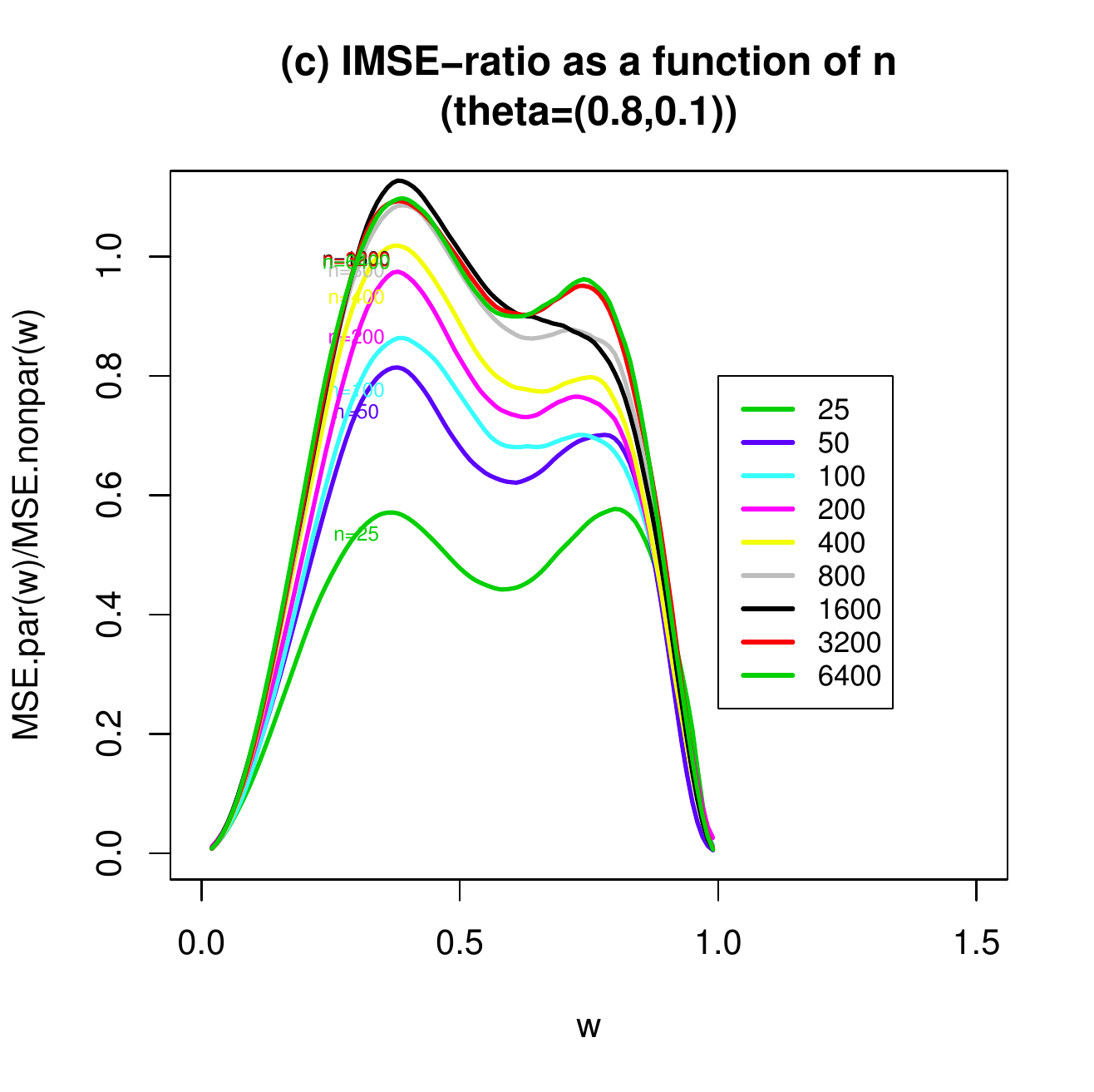}
	\caption{Ratio of parametric and nonparametric MSE, $r\left(  \mathbf{w}\right)$, 
	 for (a) $\mathbf{\theta}=(0.1,0.1)$, 
	 (b) $(0.05,0.9)$ and 
	 (c) $(0.8,0.1)$, and sample sizes $n=$25, 50, 100, 200, 400, 800, 1600, 
	 3200 and 6400 respectively.}
	\label{fig:MSE_ratio}
\end{figure}

\begin{figure}[htbp]
	\centering
\includegraphics[height=0.5\textheight,width=0.9\textwidth]
{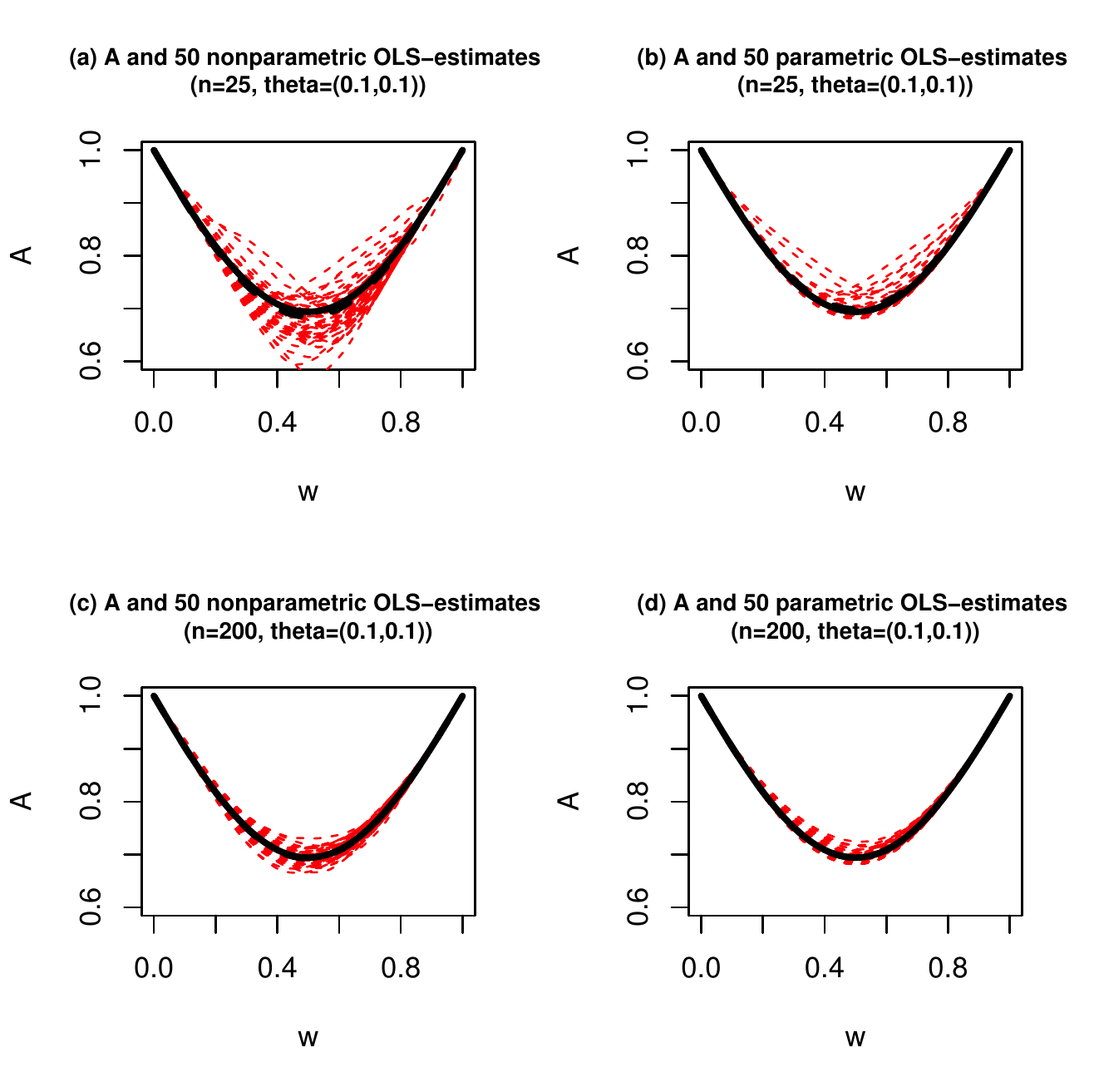}
	\caption{50 estimates $\hat{A}_{OLS}(\mathbf{w})$ (fig. (a) and (c))
	and $A\left( \mathbf{w},\mathbf{\hat{\theta}}_{M,OLS}\right)$ 
	for $\theta=(0.1,0.1)$ and $n\in\{25,\ 200\}$. The black line represents 
	the true function $A$.}
	\label{fig:A_50_estimates_theta0101}
\end{figure}

\begin{figure}[htbp]
	\centering
\includegraphics[height=0.5\textheight,width=0.9\textwidth]{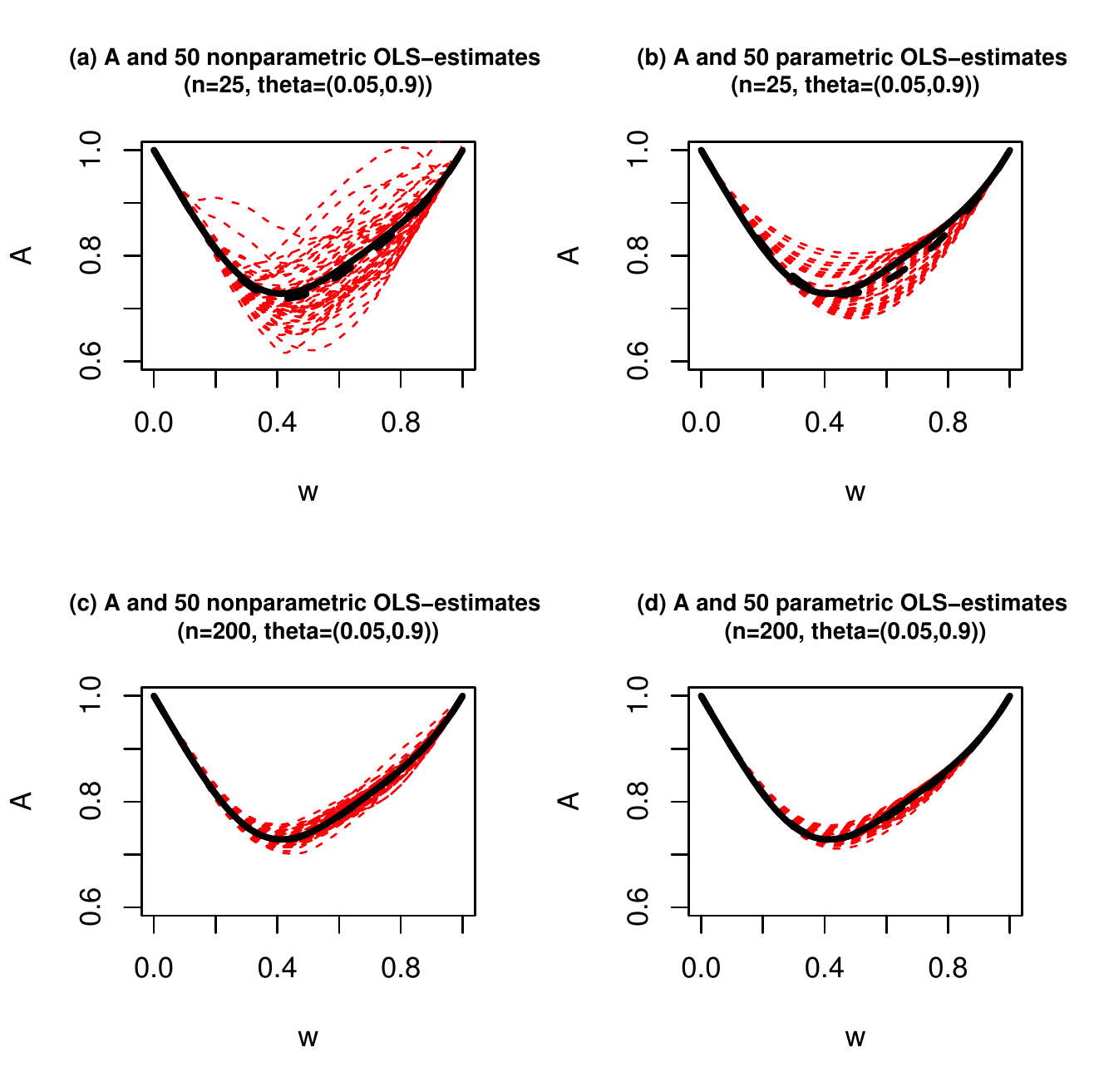}
	\caption{50 estimates $\hat{A}_{OLS}(\mathbf{w})$ (fig. (a) and (c))
	and $A(\mathbf{w},\mathbf{\hat{\theta}}_{M,OLS})$ 
	for $\theta=(0.05,0.9)$ and $n\in\{25,\ 200\}$. The black line represents 
	the true function $A$.}
	\label{fig:A_50_estimates_theta00509}
\end{figure}

\begin{figure}[htbp]
	\centering
\includegraphics[height=0.5\textheight,width=0.9\textwidth]
{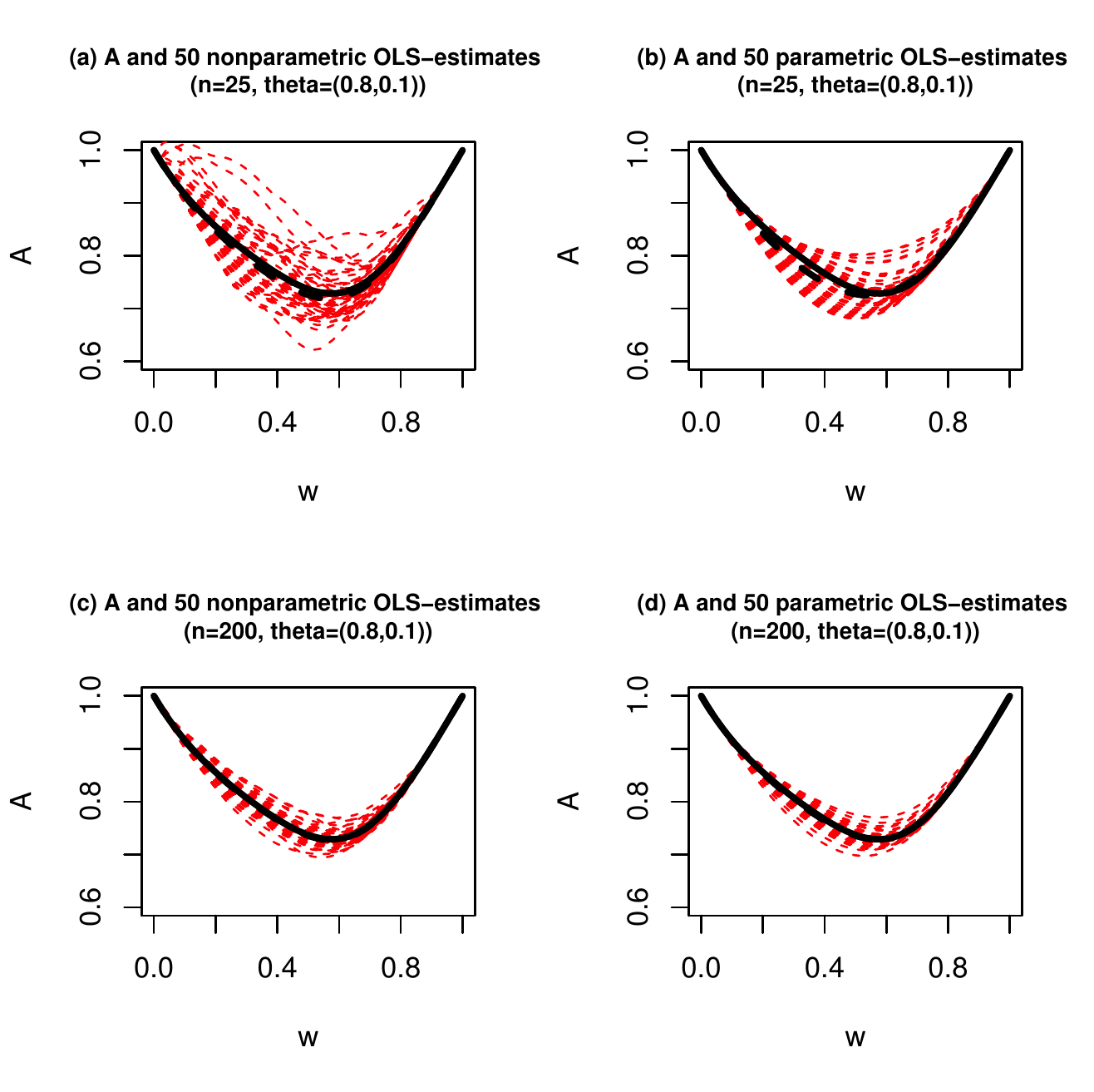}
	\caption{50 estimates $\hat{A}_{OLS}(\mathbf{w})$ (fig. (a) and (c))
	and $A( \mathbf{w},\mathbf{\hat{\theta}}_{M,OLS})$ 
	for $\theta=(0.8,0.1)$ and $n\in\{25,\ 200\}$. The black line represents 
	the true function $A$.}
	\label{fig:A_50_estimates_theta0801}
\end{figure}

\begin{figure}[htbp]
	\centering
\includegraphics[height=0.3\textheight,width=0.3\textwidth]{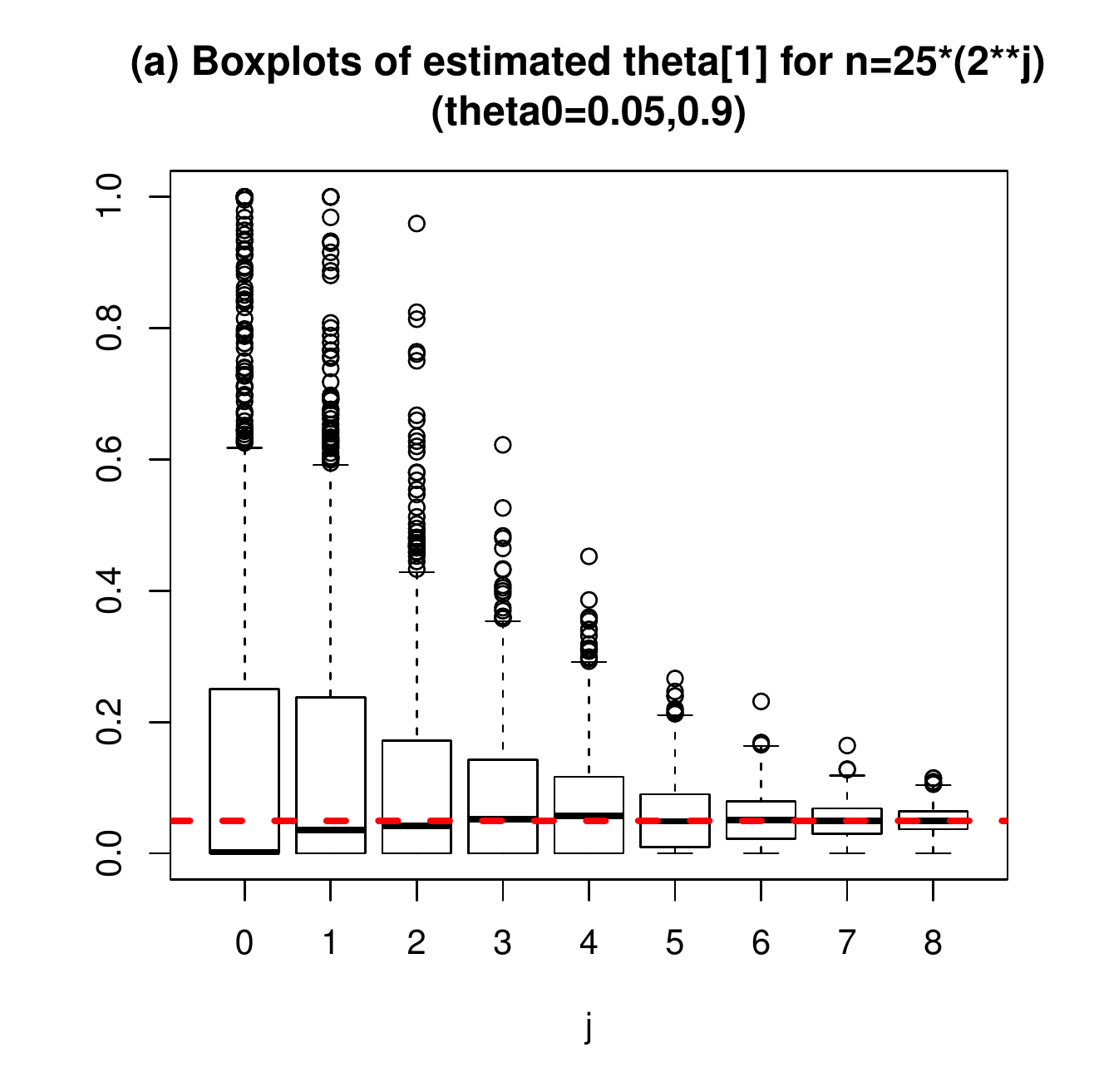}
\includegraphics[height=0.3\textheight,width=0.3\textwidth]{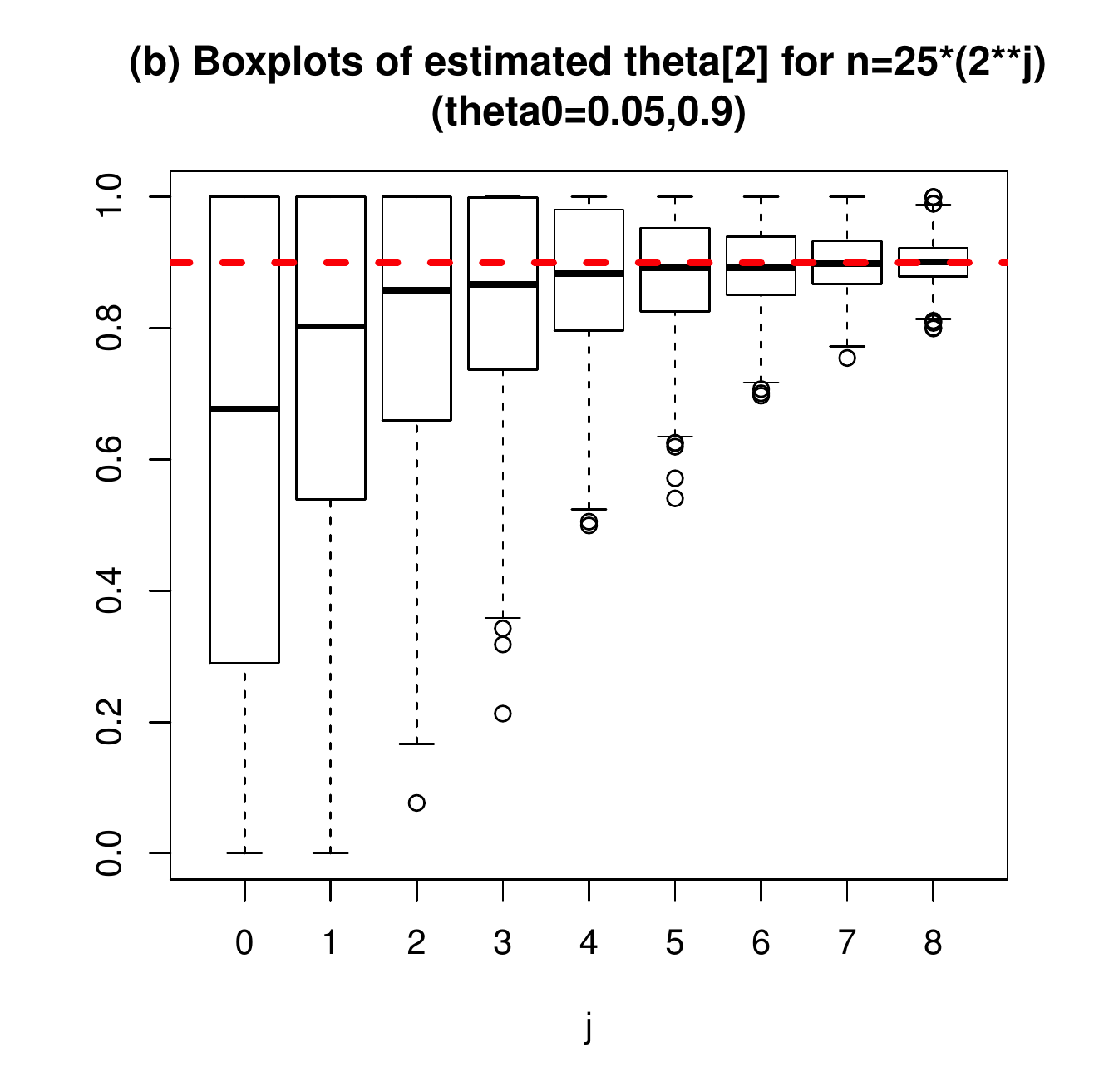}
	\caption{Boxplots of $\hat{\theta}_{OLS,1}$ (fig. a) 
	and $\hat{\theta}_{OLS,2}$ (fig. b)
	for the case with $\mathbf{\theta}=(0.05,0.9)$, and 
	$n=25\cdot 2^j$ $(j=0,1,...,8)$. The horizontal line
  represents the true value of $\theta_i$ ($i=1,2$).}
	\label{fig:box}
\end{figure}

\vfill\eject

\end{document}